\newtheorem{lemma}{Lemma}
\newtheorem{theorem}{Theorem}
\title[Additive hazard and MSMs in continuous time]{The additive hazard estimator is consistent for continuous-time marginal structural models}
\author{Pål C. Ryalen, Mats J. Stensrud, and Kjetil Røysland}
\address{Department of Biostatistics, University of Oslo, Domus Medica Gaustad, Sognsvannsveien 9, 0372 Oslo, Norway}
\date{\today}
\DeclareMathOperator{\Tr}{Tr}
\begin{document}


\begin{abstract}
Marginal structural models (MSMs) allow for causal analysis of longitudinal data. The MSMs were originally developed as discrete time models. Recently, continuous-time MSMs were presented as a conceptually appealing alternative for survival analysis. In applied analyses, it is often assumed that the theoretical treatment weights are known, but these weights are usually unknown and must be estimated from the data. Here we provide a sufficient condition for a class of continuous-time MSMs to be consistent even when the weights are estimated, and we show how additive hazard models can be used to estimate such weights. Our results suggest that the continuous-time weights perform better than IPTW when the underlying treatment process is continuous. Furthermore, we may wish to transform effect estimates of hazards to other scales that are easier to interpret causally. We show that a general transformation strategy can be used on weighted cumulative hazard estimates to obtain a range of other parameters in survival analysis, and demonstrate how this strategy can be applied on data using our \texttt{R} packages \texttt{ahw} and \texttt{transform.hazards}.
\end{abstract}

\maketitle

\section{Outline}

 

MSMs can be used to obtain causal effect estimates in the presence of confounders, which e.g.\ may be time-dependent \cite{robins2000marginal}. The procedure is particularly appealing because it allows for a sharp distinction between confounder adjustment and model selection \cite{joffe2004model}: first, we adjust for observed confounders by weighing the observed data to obtain balanced pseudopopulations. Then, we calculate effect estimates from these pseudopopulations based on our structural model. 

Traditional MSM techniques for survival analysis have considered time to be a discrete processes \cite{hernan2000marginal}. In particular, inverse probability of treatment weights (IPTWs) are used to create the pseudopopulations, and then e.g.\ several subsequent logistic regressions are fitted for discrete time intervals to mimic a proportional hazards model.

However, time is naturally perceived as a continuous process, and it also seems natural to analyse time-to-event outcomes with continuous models. Inspired by the discrete time MSMs, R\o ysland suggested a continuous-time analogue to MSMs \cite{roysland2011}. Similar to the discrete MSMs, it has been shown that the continuous MSMs can be used to obtain consistent effect estimates when the theoretical treatment weights are known \cite{roysland2011}. In particular, additive hazard regressions can be weighted with the theoretical continuous-time weights to yield consistent effect estimates. Nevertheless, the weights are usually unknown in real life and must be estimated from the data. To the best of our knowledge, the performance of MSM when the IPTW are estimated remains to be elucidated. 

In this article, we show that continuous-time MSMs also perform desirable when the treatment weights are estimated from the data: we provide a sufficient condition to ensure that weighted additive hazard regressions are consistent. Furthermore, we show how such weighted hazard estimates can be consistently transformed to obtain other parameters that are easier to interpret causally. To do this, we use stability theory of SDEs, which allows us to target a range of parameters expressed as solutions of ordinary differential equations. Many examples of such parameters can be found in \cite{ryalen2018transforming}. This is immediately appealing for causal survival analysis: First, we can use hazard models, that are convenient for regression modeling, to obtain weights. Estimates on the hazard scale are hard to interpret causally per se \cite{robins1989probability,hernan2010hazards,aalen2015does,stensrud2017exploring}, but we present a generic method to consistently transform these effect estimates to several other scales that are easier to interpret.

The continuous-time weights and the causal parameters can be estimated using the \texttt{R} package \texttt{ahw}. We show that this \texttt{ahw} weight estimator, which is based on additive hazard regression, is consistent in Theorem \ref{thm:ahwConsist}. We have implemented code for transforming cumulative hazard estimates in the package \texttt{transform.hazards}. These packages make continuous-time marginal structural modeling easier to implement for applied researchers.

\section{Weighted additive hazard regression}
\label{section: Derivations of additive hazards}

\subsection{Motivation}

We will present a strategy for dealing with confounding and dependent censoring in continuous time. Confounding, which may be time-varying, will often be a problem when analysing observational data, e.g.\ coming from health registries. The underlying goal is to assess the effect a treatment strategy has on an outcome.

We can describe processes in continuous time using local (in)dependence relations, and we can use local independence graphs to visualise these relations. A precise description of local independence can be found in \cite{roysland2011}. The local independence graph we will focus on is
\begin{equation*}\label{gr:local independence graph}
               \xymatrix{
           C 
           & \mathcal L \ar@/_/[l] \ar@/^/[d] \ar@/^/[r] & D \\
           & A \ar@/^/[ul] \ar@/^/[u]  \ar@/_/[ur] & }.
\end{equation*} Heuristically, the time-dependent confounders $\mathcal L$ and the exposure $A$ can influence the censoring process $C$ and the event of interest $D$. Moreover, the time-dependent confounders can both influence and be influenced by the exposure process. We include baseline variables, some of which may be confounders, in Section \ref{section:Hypothetical scenarios and likelihood ratios}. 

The above graph can e.g.\ describe a follow-up study of HIV-infected subjects, where the initiation and adjustment of HIV treatment depend on CD4 count measurements over time \cite{hernan2000hiv}. The CD4 count is a predictor of future survival, and it is also a diagnostic factor that informs initiation of zidovudine treatment; a CD4 count below a certain threshold indicates that treatment is needed. The CD4 count will, in turn, tend to increase in response to treatment, and is monitored over time to inform the future treatment strategy. Hence, it is a time-dependent confounder. In most follow-up studies there is a possibility for subjects to be censored, and we allow the censoring to depend on the covariate and treatment history, as long as subjects are alive.

In \cite{ryalen2018pcancer} we analysed a cohort of Norwegian males diagnosed with prostate cancer, using the theory from this article to compare treatment effectiveness of radiation and surgery, even though time-dependent confounding were thought to be a minor issue. The continuous-time MSMs allowed us to estimate causal cumulative incidences on the desired time-scale, starting from the time of diagnosis. This example shows that (continuous-time) MSMs can also be a preferable choice in the absence of time-dependent confounding.

\subsection{Hypothetical scenarios and
        likelihood ratios}
\label{section:Hypothetical scenarios and likelihood ratios}

We consider observational event-history data where $n$ i.i.d. subjects are followed over the study period $[0,T]$. Let $N^{i,A} \text{ and } N^{i,D}$ respectively be counting processes that jump when treatment $A$ and outcome $D$ of interest occur for subject $i$. Furthermore, let $Y^{i,A},Y^{i,D}$ be the at-risk processes for $A$ and $D$. We let $\mathcal{V}_0$ be the collection of baseline variables that are not confounders, as well as the treatment and outcome processes. $\mathcal{L}$ are the (time-dependent) confounders. For now, we assume independent censoring, but we will show how our methods can be applied in some scenarios with dependent censoring in Section \ref{sec:censoring weights}.

Let $\mathcal F_t^{i, \mathcal V_0 \cup \mathcal L}$ denote the filtration that is generated by all the observable events for individual $i$. Moreover, let $P^i$ denote the probability measure on $\mathcal F_T ^{i, \mathcal V_0 \cup \mathcal L}$ that governs the frequency of observations of these events, and $\lambda_t^{i,D}$ denote the intensity for $N^{i,D}$ with respect to $P^i$ and the filtration $\mathcal F_t ^{i, \mathcal V_0 \cup \mathcal L}$.

We aim to estimate the outcome in a hypothetical situation where a treatment intervention is made according to a specified strategy. Suppose that the frequency of observations we would have seen in this hypothetical scenario is described by another probability measure $\tilde P^i$ on $\mathcal F_T ^{i, \mathcal V_0 \cup \mathcal L}$. Furthermore, assume that all the individuals are also i.i.d.\ in the hypothetical scenario and that $\tilde P^i \ll P^i$, i.e.\ that there exists a likelihood ratio 
$$R^i_t := \frac {d\tilde
        P^i| _{ \mathcal F_t ^{i, \mathcal V_0 \cup \mathcal L}  }  }{d P^i| _{
                \mathcal F_t ^{i, \mathcal V_0 \cup \mathcal L}  } }$$ for each time $t$. 
We will later describe how an explicit form of $\{R^i\}_i$ can be obtained. It relies on the assumption that the underlying model is causal, a concept we define in Section \ref{section:causal validity}. For the moment we will not require this, but only assume that $\lambda_t^{i,D}$ defines the intensity with respect to $\mathcal F_t ^{i, \mathcal V_0 \cup \mathcal L}$ for both $P^i$ and $\tilde P^i$; that is, the functional form of $\lambda_t^{i,D}$ is identical under both $P^i$ and $\tilde P^i$. 

Suppose that $N^{i,D}$ has an additive hazard with respect to $\tilde P^i$ and
the filtration $\mathcal F_t ^{i, \mathcal V_0}$ that is generated by the {the components of}  $\mathcal V_0$. We stress that we consider the intensity process marginalized over $\mathcal L$, and thereby it is defined with respect to $\mathcal F_t ^{i, \mathcal V_0}$, and not $\mathcal F_t ^{i, \mathcal V_0 \cup \mathcal L}$.   In other words, we assume that the hazard for event $D$ with respect to the filtration $\mathcal F_t ^{i, \mathcal V_0}$ is additive, and can be written as
\begin{align}
	X^{i \intercal}_{t-} b_t, \label{eq:additiveOutcomeHazard}
\end{align}
where $b_t$ is a bounded and continuous vector valued function, and the components of $X^{i}$ are covariate processes or baseline-variables from $\mathcal V_0$.

\subsection{Re-weighted additive hazard regression}
\label{section:reweighting}
Our main goal is to estimate the cumulative coefficient function in \eqref{eq:additiveOutcomeHazard}, i.e.\
\begin{align}
	B_t := \int_0^t b_s ds \label{eq:cumulativeCoefficient}
\end{align}
from the observational data distributed according to $ P = P^1 \otimes \dots \otimes P^n$. If we had known all the true likelihood ratios, we could try to estimate \eqref{eq:cumulativeCoefficient} by re-weighting each individual in Aalen's additive hazard regression \cite[VII.4]{Andersen} according to its likelihood ratio. However, the true weights are unlikely to be known, even if the model is causal. In real-life situations, we can only hope to have consistent estimators for these weights. We therefore consider $ \mathcal F_{t}^{1,\mathcal  V_0 \cup \mathcal L  } \otimes \cdots \otimes \mathcal F_{t}^{n,\mathcal V_0 \cup \mathcal L }$-adapted estimates  $\{ R^{(i,n)}_t \} _{n}$ that converge to $R^i_t$ under relatively weak assumptions, such that Aalen's additive hazard regression for the outcome re-weighted according to $\{R^{(i,n)}_t\}$ gives consistent estimates of the causal cumulative hazard. The estimator we will consider is defined as follows: Let $N^{(n)}$ be the vector of counting processes and $X^{(n)}$ the matrix containing the $X^i$'s, that is,
\begin{equation}
  N^{(n)}_t := \begin{pmatrix}
  N^{1,D}_t \\ \vdots \\N^{n,D}_t
      \end{pmatrix} \text{ and }
X_s^{(n)} := \begin{pmatrix}
        X^{1,1}_s &  \dots & X^{1,p}_s \\
        \vdots&  & \vdots \\ 
	    X^{n,1}_s &  \dots & X^{n,p}_s
\end{pmatrix},
\end{equation}
and let  $Y^{(n),D}_s$ denote the $n\times n$-dimensional diagonal matrix where the $i$'th diagonal element is $Y^{i,D}_s \cdot R^{(i,n)}_{s-}$. The weighted additive hazard regression is given by: 
\begin{equation}
	\label{eq:weighted cumulative hazard estimator} B_t ^{(n)} := \int_0^t
        (X_{s-}^{(n)\intercal }  Y^{(n),D}_{s} X_{s-}^{(n)})^{-1}
        X_{s-}^{(n)\intercal}   Y^{(n),D}_{s} d N_s^{(n)}. 
\end{equation}




\subsubsection{Parameters that are transformations of cumulative hazards}
\label{section:transforming hazards}
It has recently been emphasised that the common interpretation of hazards in survival analysis as the causal risk of death during $(t, t + \Delta]$ for an individual that is alive at $t$, is often not appropriate; see e.g.\  \cite{hernan2010hazards}. An example in \cite{aalen2015does} shows that this can also be a problem in RCTs; if $N$ is a counting process that jumps at the time of the event of interest, $A$ is a randomised treatment, and $U$ is an unobserved frailty, the following causal diagram describes such a situation:
\begin{equation*}\label{gr:collider}
               \xymatrix{
           A \ar[r] \ar@/_/[dr] & N_t \ar[d] & \ar[l]  \ar@/^/[dl] U \\
           & N_{t+ \Delta}& }.
\end{equation*}
If we consider the probability of an event before $N_{t + \Delta}$, conditioning on no event at time $t$, we condition on a collider that opens a non-causal path from $A$ to the outcome. This could potentially have dramatic consequences since much of survival analysis is based on the causal interpretation of hazards, e.g.\ hazard ratios.

In \cite{ryalen2018transforming}, we have suggested a strategy to handle this situation: even if it is difficult to interpret hazard estimates causally per se, we can use hazard models to obtain other parameters that have more straightforward interpretations. Population based measures such as the survival function, the cumulative incidence functions, and the restrictive mean survival function, do not condition on survival and will therefore not be subject to the selection bias. Moreover, these measures, and many others (see \cite{ryalen2018transforming,stensrud2018nullhypothesis} for examples), solve differential equations driven by cumulative hazards, i.e.\ they are functions $\eta_t$ that can be written on the form
\begin{equation}  \label{eq:ODE}
        {\eta}_t = {\eta}_0 + \int_0^t F({\eta}_{s-}) d{B}_{s}, 
\end{equation}
where $B$ are cumulative hazard coefficients, and $F$ is a Lipschitz continuous matrix-valued function. In \cite{ryalen2018transforming}, we showed how to estimate $\eta$ by replacing the integrator in \eqref{eq:ODE} with an estimator $B^{(n)}$ that can be written as a counting process integral. Examples of such $B^{(n)}$ include the Nelson-Aalen, or more generally Aalen's additive hazard estimator. This gave rise to the stochastic differential equation
\begin{equation}  \label{eq:SDE}
        \eta_t^{(n)} = \eta_0^{(n)} + \int_0^t F(\eta_{s-}^{(n)})
        dB_s^{(n)}, 
\end{equation}
that is easy to solve on a computer; it is a piecewise constant, recursive equation that jumps whenever the integrator $B^{(n)}$ jumps. Hence, \eqref{eq:SDE} can be solved using a \texttt{for} loop over the jump times of $B^{(n)}$, i.e.\ the survival times of the population.

A simple example of a parameter on the form \eqref{eq:ODE} is the survival function, which reads $S_t = 1 - \int_0^t S_{s} dB_s$, where $B$ is the cumulative hazard for death. In this case, the estimation strategy \eqref{eq:SDE} yields the Kaplan-Meier estimator. Nevertheless, some commonly studied parameters cannot be written on the form \eqref{eq:ODE}, such as the median survival, and the hazard ratio.

In \cite{ryalen2018transforming} we showed that $\eta^{(n)}$ provides a consistent estimator of $\eta$ if
\begin{itemize} 
\item $\lim_{ n \rightarrow \infty} P (    \sup_{t\leq T} | B_t^{(n)} -  B_t | \geq \epsilon
) = 0 $ for every $\epsilon >0$, i.e.\ the cumulative hazard estimator is consistent, and
\item the estimator $ B^{(n)}$ is
predictably uniformly tight, abbreviated P-UT. 
\end{itemize}
The additive hazard estimator satisfies both these criteria, and additive hazard regression can thus be used as an intermediate step for flexible estimation of several parameters, such as the survival, the restricted mean survival, and the cumulative incidence functions \cite{ryalen2018transforming}. In Theorem \ref{thm:AalenConsistency}, we show that also the re-weighted additive hazard regression satisfies these properties, which is a major result in this article. Thus, we can calculate causal cumulative hazard coefficients, and transform them to estimate MSMs that solve ordinary differential equations consistently. In Section \ref{section:a marginal structural model} we illustrate how such estimation can be done, by including an example of a marginal structural relative survival model on simulated data.

A mathematically precise definition of P-UT is given in \cite[VI.6a]{JacodShiryaev}. 
We will not need the full generality of this definition here. Rather, we will use
\cite[Lemma 1]{ryalen2018transforming} to determine if processes are P-UT. The Lemma states that whenever $\{ J_t^{(n)}\}_n$ is  a sequence of semi-martingales  on
              $[0,T]$ with Doob-Meyer decompositions  
  $$ J^{(n)}_t = \int_0 ^t  \rho_s^{(n)}ds +  M^{(n)}_t ,$$
  where $\{ M^{(n)}\}_n$ are  square integrable local martingales and 
  $\{ \rho^{(n)}\}_n$ are  predictable processes such that 
                                                  \begin{equation}
                                                          \label{eq:tightRho}
                                      \lim_{a \rightarrow \infty} \sup_n P \bigg( \sup_s  | \rho_s^{(n)} |_1 \geq a   \bigg) = 0
                                      \text{ and}  
                      \end{equation}

                        \begin{equation} \label{eq:PUTMart}
                                      \lim_{a \rightarrow \infty} \sup_n P
                              \bigg(   \Tr 
                              \langle  M^{(n)} \rangle_T  \geq a   \bigg) = 0, 
                      \end{equation}
                        then  $\{ J_t^{(n)}\}_n$ is P-UT. Here, $\Tr$ is the trace function, and $\langle \cdot \rangle$ is the predictable variation.

 \subsection{Consistency and  P-UT property}

The consistency and P-UT property of $B^{(n)}$ introduced in Section \ref{section:reweighting} is stated as a Theorem below. A proof can be found in the Appendix.

 \begin{theorem}[Consistency of weighted additive hazard regression] \label{thm:AalenConsistency}
                       Suppose that 
                       \begin{enumerate}[I)]
                               \item   \label{enum:equalFooting}
                                       The conditional density of $R^{(i,n)}_t$
                                       given   ${ \mathcal{F}_t^{i, \mathcal{V}_0 \cup \mathcal{L} } }$ 
                         does not depend on $i$, 
 
                                       \item  \label{eq:lambdabound} 
                                   \begin{equation*}
                                            E_P [ \sup_{t \leq T} | \lambda_t^{ {1},D} |^2  ]  <
                                      \infty ~\text{and } E_P [ \sup_{t \leq T} | X_t^{1} |^2  ]  <
                                      \infty 
                                   \end{equation*}
                              \item \label{eq:boundedmatrix} Let 
                                      \begin{equation*}
          \Gamma^{(n)}_t := \bigg(\frac 1 n   X^{(n)\intercal}_{t-} Y^{(n),D}_t
          X^{(n)}_{t-} \bigg) = 
                      \begin{pmatrix}
                              \frac{1}{n} \sum_{k = 1}^n 
                              R^{(k,n)}_{t-}X_{t-}^{k,i} Y^{k,D}_t X_{t-}^{k,j}
                      \end{pmatrix}_{i,j}, 
              \end{equation*}
and suppose that \begin{equation*} \lim_{ a \rightarrow \infty} \inf_n
                                      P\bigg( \sup_{t \leq T}  
                                              \Tr \big( \Gamma_t^{(n) -1}
                                              \big) 
                                              > a \bigg) =  0, 
                                      \end{equation*}
                               \item \label{eq:consistW}
                                       Suppose that $\{R^i\}_i$ and
                                      $\{R^{(i,n)}\}_{i,n}$ are
                                      uniformly bounded and 
                                      \begin{equation}
                                            \lim\limits_{n\longrightarrow \infty} P \big( \big|R^{(i,n)}_t -
                                            R^i_t \big| >
                                            \delta
                                            \big) = 0
                                        \end{equation}
                                        for every $i$, $\delta > 0$ and $t$. 
                                            
                       \end{enumerate}

                       Then $\{{B}^{(n)}\}_n$ is P-UT and 
                       \begin{equation}
                               \lim\limits_{n\longrightarrow \infty} P \bigg( \sup_{t \leq T}   \big| B^{(n)}_t - B_t    \big| \geq \delta
                               \bigg)  = 0,
                       \end{equation}
for every $\delta > 0$.

      \end{theorem}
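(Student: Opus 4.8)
The plan is to work throughout under the observational law $P$, to split the estimation error into a local-martingale term and a bias term, and to obtain both the consistency statement and the P-UT property from this single decomposition, the latter by checking the hypotheses of \cite[Lemma 1]{ryalen2018transforming}. Writing $\Gamma^{(n)}_s$ for the matrix in assumption \ref{eq:boundedmatrix} and working, as in the definition of $B^{(n)}$, on the set where the relevant inverse exists, one has $\int_0^t \Gamma^{(n)-1}_s\Gamma^{(n)}_s b_s\,ds = B_t$, hence
\begin{equation*}
        B^{(n)}_t - B_t = \int_0^t \Gamma^{(n)-1}_s\,\frac1n\sum_{i} R^{(i,n)}_{s-}Y^{i,D}_s X^i_{s-}\big(dN^{i,D}_s - X^{i\intercal}_{s-}b_s\,ds\big).
\end{equation*}
With $M^{i,D}_t := N^{i,D}_t - \int_0^t \lambda^{i,D}_s\,ds$, a square-integrable $P$-martingale for the product filtration (the subjects being i.i.d.), substituting $dN^{i,D}_s = dM^{i,D}_s + \lambda^{i,D}_s\,ds$ gives $B^{(n)}_t - B_t = \mathcal M^{(n)}_t + \mathcal A^{(n)}_t$, where $\mathcal M^{(n)}_t = \int_0^t \Gamma^{(n)-1}_s\frac1n\sum_i R^{(i,n)}_{s-}Y^{i,D}_s X^i_{s-}\,dM^{i,D}_s$ and $\mathcal A^{(n)}_t = \int_0^t \Gamma^{(n)-1}_s\frac1n\sum_i R^{(i,n)}_{s-}W^i_s\,ds$ with $W^i_s := Y^{i,D}_s X^i_{s-}(\lambda^{i,D}_s - X^{i\intercal}_{s-}b_s)$. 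This is the Doob--Meyer decomposition of $B^{(n)}$ under $P$, with predictable drift $\rho^{(n)}_s = \Gamma^{(n)-1}_s\frac1n\sum_i R^{(i,n)}_{s-}Y^{i,D}_s X^i_{s-}\lambda^{i,D}_s$ and martingale part $\mathcal M^{(n)}$.

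The crux is the bias term $\mathcal A^{(n)}$, and the key is the change-of-measure identity $E_P[R^i_{s-}W^i_s] = 0$ for the true weights. Since $Y^{i,D}_s X^i_{s-}$ is predictable for the marginal filtration $\mathcal F^{i,\mathcal V_0}$ while $\lambda^{i,D}_s$ is predictable for $\mathcal F^{i,\mathcal V_0\cup\mathcal L}$, the martingale property of $R^i$ and the change of measure on $\mathcal F^{i,\mathcal V_0\cup\mathcal L}_s$ give $E_P[R^i_{s-}Y^{i,D}_s X^i_{s-}\lambda^{i,D}_s] = \tilde E^i[Y^{i,D}_s X^i_{s-}\lambda^{i,D}_s]$; conditioning on $\mathcal F^{i,\mathcal V_0}_{s-}$ and invoking the innovation theorem (the $\tilde P^i$-intensity of $N^{i,D}$ for the marginal filtration is the additive hazard \eqref{eq:additiveOutcomeHazard}) rewrites this as $\tilde E^i[Y^{i,D}_s X^i_{s-}X^{i\intercal}_{s-}b_s] = E_P[R^i_{s-}Y^{i,D}_s X^i_{s-}X^{i\intercal}_{s-}b_s]$, which is the claim. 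Now split $R^{(i,n)}_{s-} = R^i_{s-} + (R^{(i,n)}_{s-}-R^i_{s-})$: the first summand makes $\frac1n\sum_i R^i_{s-}W^i_s$ an average of i.i.d.\ mean-zero terms, so it tends to $0$ in $L^1$ for each $s$ by the law of large numbers (the $L^1$ bound on $R^1_{s-}W^1_s$ coming from the uniform bound on the weights in \ref{eq:consistW}, the second-moment bound \ref{eq:lambdabound} and boundedness of $b$); for the second summand, assumptions \ref{enum:equalFooting} and \ref{eq:consistW} yield $E_P\big|\frac1n\sum_i (R^{(i,n)}_{s-}-R^i_{s-})W^i_s\big| \le E_P\big[\,|R^{(1,n)}_{s-}-R^1_{s-}|\,|W^1_s|\,\big]\to 0$ by dominated convergence with dominating function $2\|R\|_\infty|W^1_s|$. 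Both bounds are integrable over $s\in[0,T]$, so $\int_0^T\big|\frac1n\sum_i R^{(i,n)}_{s-}W^i_s\big|\,ds\to 0$ in probability; since $\sup_{t\le T}|\mathcal A^{(n)}_t|$ is at most, up to a dimensional constant, $\big(\sup_{s\le T}\Tr\Gamma^{(n)-1}_s\big)\int_0^T\big|\frac1n\sum_i R^{(i,n)}_{s-}W^i_s\big|\,ds$, the tightness of $\sup_s\Tr\Gamma^{(n)-1}_s$ assumed in \ref{eq:boundedmatrix} gives $\sup_{t\le T}|\mathcal A^{(n)}_t|\to 0$ in probability.

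For the martingale term, orthogonality of $M^{i,D}$ and $M^{j,D}$ for $i\neq j$ and $\langle M^{i,D}\rangle_s = \int_0^s\lambda^{i,D}_u\,du$ yield, using that $|X^i_{s-}|^2$ is predictable and that $N^{i,D}$ jumps at most once,
\begin{equation*}
        \Tr\langle\mathcal M^{(n)}\rangle_T \;\le\; C\,\frac{\|R\|_\infty^2}{n}\Big(\sup_{s\le T}\Tr\Gamma^{(n)-1}_s\Big)^{2}\cdot\frac1n\sum_{i}\int_0^T |X^i_{s-}|^2\,dN^{i,D}_s ,
\end{equation*}
and $\frac1n\sum_i\int_0^T|X^i_{s-}|^2\,dN^{i,D}_s$ has $P$-expectation at most $E_P[\sup_{s\le T}|X^1_s|^2]<\infty$ by \ref{eq:lambdabound}, hence is bounded in probability uniformly in $n$. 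Localizing at $\tau_a := \inf\{t : \Tr\Gamma^{(n)-1}_t > a\}$ and applying Doob's inequality gives $E_P\sup_t|\mathcal M^{(n),\tau_a}_t|^2 \le C' a^2/n \to 0$, and combining with \ref{eq:boundedmatrix} as above gives $\sup_{t\le T}|\mathcal M^{(n)}_t|\to 0$ in probability; together with the bias estimate this proves consistency. The P-UT property then follows from \cite[Lemma 1]{ryalen2018transforming} applied to the decomposition above: the displayed bound verifies \eqref{eq:PUTMart}, and \eqref{eq:tightRho} holds because $\sup_{s}|\rho^{(n)}_s|_1 \le \big(\sup_s\Tr\Gamma^{(n)-1}_s\big)\cdot\|R\|_\infty\,\frac1n\sum_i\sup_{s}\lambda^{i,D}_s|X^i_s|_1$ is a product of two sequences that are tight uniformly in $n$ — the first by \ref{eq:boundedmatrix}, the second with uniformly bounded expectation by the Cauchy--Schwarz inequality and \ref{eq:lambdabound}.

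I expect the main obstacle to be twofold. Conceptually it is the change-of-measure identity $E_P[R^i_{s-}W^i_s]=0$: this is precisely where the additive-hazard assumption, which is stated on the marginal filtration $\mathcal F^{i,\mathcal V_0}$, must be reconciled with the likelihood ratio $R^i$, which lives on the larger filtration $\mathcal F^{i,\mathcal V_0\cup\mathcal L}$, via the innovation theorem, and where the splitting of $R^{(i,n)}$ into the true weight plus an error needs assumptions \ref{enum:equalFooting} and \ref{eq:consistW} to keep the error term controlled in the mean. Technically it is carrying the random, non-uniformly-bounded matrix $\Gamma^{(n)-1}$ through every estimate: this is why the theorem postulates tightness of $\sup_{s\le T}\Tr\Gamma^{(n)-1}_s$ in \ref{eq:boundedmatrix} rather than a deterministic bound, and it is what forces the localization arguments at the stopping times $\tau_a$.
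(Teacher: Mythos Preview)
Your proposal is correct and follows essentially the same route as the paper: the identical martingale-plus-bias decomposition, the same change-of-measure/innovation-theorem identity $E_P[R^i_{s-}W^i_s]=0$, and the same perturbed-LLN splitting $R^{(i,n)}_{s-}=R^i_{s-}+(R^{(i,n)}_{s-}-R^i_{s-})$ (which is the content of the paper's Lemma~\ref{lem:perturbedwlln}, reproved inline). The only notable differences are technical: for the martingale term the paper exploits the algebraic identity $\Gamma^{(n)-1}X^{(n)\intercal}Y^{(n),D}X^{(n)}\Gamma^{(n)-1}=n\,\Gamma^{(n)-1}$ to reduce the variance bound to $\Tr(\Gamma^{(n)-1})\cdot\frac1n\max_i\lambda^{i,D}_s$ and then invokes Lemma~\ref{lem:concentrationLem} and Lenglart's inequality, whereas you use the cruder $(\Tr\Gamma^{(n)-1})^2$ bound, the explicit $1/n$ factor, and localization at $\tau_a$ with Doob; for the bias term the paper appeals to \cite[Proposition~II.5.3]{Andersen} while you use dominated convergence in $s$ directly. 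One cosmetic slip: the displayed bound on $\Tr\langle\mathcal M^{(n)}\rangle_T$ should carry $\lambda^{i,D}_s\,ds$ rather than $dN^{i,D}_s$ (the latter is the optional variation), but since the two integrals have the same $P$-expectation for predictable integrands, your expectation bound via $N^{i,D}_T\le1$ and assumption~\ref{eq:lambdabound}) is valid either way.
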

      
Heuristically, condition $\ref{enum:equalFooting})$ states that if we know individual $i$'s realisation of the variables and processes in $\mathcal{V}_0 \cup \mathcal{L}$ up to time $t$, no other information on individual $i$ is used for estimating her weight at $t$. Condition $\ref{eq:lambdabound})$ ensures that the number of outcome events will not blow up, or suddenly grow by an extreme amount. Condition $\ref{eq:boundedmatrix})$ implies that there can be no collinearity among the covariates, or more precisely that the inverse matrix of $\big(E[ X^{1,i}_t X^{1,j}_t]\big)_{i,j}$ is uniformly bounded in $t$. Condition $\ref{eq:consistW})$ states that the weight estimator converges to the theoretical weights $R_t^i$, in a not very strong sense. The uniform boundedness of $\{ R^i \}_i$ is a positivity condition similar to the positivity condition required for standard inverse probability weighting.

\section{Causal validity and a consistent estimator the individual likelihood ratios}
        \label{section:causal validity}

We can model the individual likelihood ratio in many settings where the underlying model is causal. To do this, we assume that each subject is represented by the outcomes of $r$ baseline variables $Q_1, \dots Q_r$, and $d$ counting processes $N^1, \dots, N^d$. Moreover, we let $\mathcal F_t$ denote the filtration that is generated by all their possible events before $t$. 

Suppose that $\lambda^1, \dots, \lambda^d$ are the intensities of the counting processes $N^1, \dots, N^d$ with respect to the filtration $\mathcal F_t$ and the observational probability $P$. Now, by \cite{jacod1975}, $P|_{\mathcal F_T }$ is uniquely determined by all the intensities and the conditional densities at baseline of the form $dP\big( Q^{k} | Q^{k-1}, \dots, Q^{1} \big)$, because the joint density at baseline factorises as a product of conditional densities.

Suppose that the observational scenario, where the frequency of events are described by $P$, is subject to an intervention on the component represented by $N^j$. Our model is said to be \textbf{causal} if such an intervention would not change the 'local characteristics' of the remaining nodes. More precisely this means that 
\begin{itemize}
    \item The functional form of the intensities on which we do not intervene coincide under $P$ and the intervened scenario $\tilde P$, i.e.\ $\lambda^{k}$ would also define the intensity for $N^{k}$ with respect to $\tilde P$ when $k \neq j$, and 

    \item  The conditional density of each $Q^{k}$, given
    $Q^{k-1}, \dots, Q^{1}$ would be  the same with
    respect to both  $P$ and $\tilde P$, i.e.\  $$dP\big( Q^{k}
    | Q^{k-1}, \dots, Q^{1} \big) = 
    d\tilde  P\big( Q^{k}
    | Q^{k-1}, \dots, Q^{1} \big)$$
    for $k=1,\cdots r$.
\end{itemize}

If the intervention instead were targeted at a
  baseline variable, say $Q^{j}$, and this intervention
  would replace   $dP\big( Q^{k} | Q^{k-1}, \dots, Q^{1} \big)$
  by $ d\tilde P\big( Q^{k} | Q^{k-1}, \dots, Q^{1} \big)$, for $k=1,\cdots r$, the model is said to be causal if
   \begin{itemize}
        \item 
        The intensity process for $N^{k}$ with respect to $P$ and $\tilde P$ coincide for all $k=1,\cdots p$, and 
        \item The remaining conditional densities at baseline
        coincide, i.e.\  
        $$dP\big( Q^{k}
        | Q^{k-1}, \dots, Q^{1} \big) = 
        d\tilde  P\big( Q^{k}
        | Q^{k-1}, \dots, Q^{1} \big),
        $$  
        for $k \neq j$.

\end{itemize}
        Note that the latter is in agreement with Pearl's definition of a causal model \cite{pearl}.

This notion of causal validity leads to an explicit formula for the likelihood ratio. If the intervention is aimed at $N^j$, changing the intensity from $\lambda^j$ to $\tilde \lambda^j$, then the likelihood ratio takes the form
\begin{equation} \label{eq:JACOD}
        R_t = \big( \prod_{s \leq t} \theta_s^{\Delta N_s^j}\big) \exp\big( \int_0^t \lambda_s^j -
        \tilde \lambda_s^j ds \big),
\end{equation}
where $\theta_t := \frac{\tilde \lambda_t^j }{\lambda_t^j }$, see \cite{roysland2011} and \cite{jacod1975}. 

If the intervention is targeted at a baseline variable, the likelihood ratio corresponds to the ordinary propensity score
         \begin{equation}\label{eq:propensity score}
                R_0 := \frac {d\tilde P\big( Q^{j}
                        | Q^{j-1}, \dots, Q^{1} \big)} 
                {dP\big( Q^{j}
                        | Q^{j-1}, \dots, Q^{1} \big)}. 
\end{equation}

Interventions on several nodes yield a likelihood ratio that is a product of terms on the form \eqref{eq:JACOD} and \eqref{eq:propensity score}. The terms in the product could correspond to baseline interventions, time-dependent treatment interventions, or interventions on the censoring intensity. It is natural to estimate the likelihood ratio, or weight process by a product of baseline weights, treatment weights, and censoring weights.

We want, of course, to identify the likelihood ratio that corresponds to $\tilde P$, as this is our strategy to assess the desired randomised trial. Following equations \eqref{eq:JACOD} and \eqref{eq:propensity score}, we see that the intervened intensities and baseline variables must be modeled correctly, and specifically that a sufficient set of confounders must be included when modeling the treatment intensity. Additionally, the MSM for the outcome must be correctly specified. An important consequence of the results in this paper is that a class of MSM parameters that solve ODEs driven by cumulative hazards can be estimated consistently.


As long as the intervention acts on a counting process or a baseline variable, the same formula would hold in much more general situations where the remaining covariates are represented by quite general stochastic processes. The assumption of 'coinciding intensities' must then be replaced by the assumption that the 'characteristic triples', a generalisation of intensities to more general processes, coincides for $P$ and $\tilde P$; see \cite[II.2]{JacodShiryaev}. 

        \subsection{Estimation of weights using additive hazard regression}
\label{section:additive hazard weight estimator}

Suppose we have a causal model as described in the beginning of Section \ref{section:causal validity}, allowing us to obtain a known form of the likelihood ratio $R^i$. To model the hypothetical scenario, we need to rely on estimates of the likelihood ratio. In the following, we will only focus on a causal model where we replace the intensity of treatment by $\tilde{\lambda}^{i,A}$, the intensity of $N^{i,A}$ with respect to $P$ and the subfiltration $\mathcal F_t ^{\mathcal V_0}$. It is a consequence of the innovation theorem \cite{Andersen} that $E[\lambda^{i,A}_t | \mathcal F_{t-} ^{\mathcal V_0} ] = \tilde \lambda_t^{i,A}$. Moreover, an exercise in asymptotics of stochastic processes shows that if we discretise time, the associated marginal model structural weights from \cite{robins2000marginal} approximate \eqref{eq:JACOD} gradually as the time-resolution increases.

We will not follow the route of \cite{robins2000marginal} to estimate $R^i$. Instead, we will use that \eqref{eq:JACOD} is the unique solution to the stochastic differential equation
        \begin{align*}
	R_t^i  &= R_0^i + \int_0^t R_{s-}^i  dK_s^i  \\
    K_t^i  &= \int_0^t (\theta_s^i - 1) dN_s^{i,A} + 
    \int_0^t \lambda_s^{i,A} ds -  \int_0^t \tilde \lambda_s^{i,A} ds,
    \end{align*}

with $\theta^i = \frac{\tilde{\lambda}^{i,A}}{\lambda^{i,A}}$. To proceed, we assume that $\lambda^{i,A}$ and $\tilde
        \lambda^{i,A}$ satisfy the additive hazard model, i.e.\ that
        there are vector valued functions $h_t$ and $\tilde h_t$, and covariate processes $Z_t$ and $\tilde Z_t$ that are adapted to $\mathcal F_t^{i,\mathcal V_0 \cup \mathcal L} $  and $\mathcal F_t^{i,\mathcal V_0} $ respectively, and 
                \begin{equation}
                        \lambda_t^{i,A} = Y_t^{i,A} Z_t^{i \intercal} h_t  \text{ and } 
                        \tilde \lambda_t^{i,A} = Y_t^{i,A} \tilde Z_t^{i \intercal} \tilde
                        h_t. 
                \end{equation}
The previous equation translates into the following: 
\begin{align*}
	R_t^i &= R_0^i + \int_0^t R_{s-}^i dK_s^i \\
    K_t^i &= \int_0^t (\theta_s^i - 1) dN_s^{A,i} + \int_0^t Y_s^{i,A}Z_s^{i
            \intercal} dH_s - \int_0^t Y_s^{i,A}\tilde{Z}_s^{i \intercal}
    d\tilde{H}_s,
\end{align*}
where $H_t = \int_0^t h_s ds$ and  $\tilde H_t = \int_0^t \tilde h_s ds$. 
                
Our strategy is to replace $R_0^i$, $H$, $\tilde H$ and $\theta^i$ by estimators. This gives the following stochastic differential equation:   
\begin{align}
   {R}_t^{(i,n)} &= R_0^{(i,n)} + \int_0^t {R}_{s-}^{(i,n)}dK_s^{(i,n)} 
    \label{eq:Restimator} \\
    K_t^{(i,n)} &= \int_0^t ( \theta_{s-}^{(i,n)} - 1 )dN_s^{i,A} + 
    \int_0^t Y^{i,A}_s Z_{s-}^{i\intercal} d H^{(n)}_s - 
     \int_0^t Y^{i,A}_s \tilde Z_{s-}^{i \intercal} d \tilde H^{(n)}_s,
    \notag
\end{align}
where the quantity $R_0^{(i,n)}$ is assumed to be a consistent estimator of $R_0^i$. We will use the additive hazard regression estimators $H^{(n)}$ and $\tilde H^{(n)}$ for estimation $H$ and $\tilde H$, \cite{Andersen}. Moreover, suppose that  $\theta^{(i,n)}_0$ is a consistent estimator for $\theta_0^i$, the intensity ratio evaluated at zero. Our candidate for $\theta^{(i,n)}_t$, when $t > 0$, depends on the choice of an increasing sequence $\{\kappa_n\}_n$ with $\lim\limits_{n\longrightarrow \infty} \kappa_n = \infty$ such that $\sup_{n} \frac{\kappa_n }{\sqrt n } < \infty$. This estimator takes the form 
\begin{align}
    \theta^{(i,n)}_t &= \begin{cases}
            \theta_0^{(i,n)},  0 \leq  t < 1/ \kappa_n \\
            \frac{  \int_{t - 1/\kappa_n  }^t  Y_s^i \tilde Z_{s-}^{i\intercal } 
                    d\tilde H^{(n)}_s} {  \int_{t - 1/\kappa_n  }^t  Y_s^i  Z_{s-}^{i\intercal } 
                    d H^{(n)}_s}, 1/ \kappa_n    \leq t \leq T. 
    \end{cases}
    \label{eq:thetaTruncated}
\end{align}

$\kappa_n$ can thus be interpreted as a smoothing parameter. We let $Y^{(n),A}$ be the diagonal matrix where the $i$'th diagonal element is $Y^{i,A}$. The following Theorem says that the above strategy works out. 

\begin{theorem}\label{thm:ahwConsist}

        Suppose that 
        \begin{enumerate}[a.]

                \item \label{item:thetaConditions} Each $\theta^{i}$ is uniformly bounded, and right-continuous at $t = 0$. 
                \item For each $i$, \begin{equation} \label{eq:adhoc}
                                \lim_{ \delta \rightarrow 0} P \big( \inf\limits_{t \leq T} |
                        \tilde{Z}_t^{i\intercal} \tilde h_t | \leq \delta   \big)  =
                        0,
                \end{equation}
           \item \label{assumpt:supL2}  $E\big[\sup\limits_{s \leq T} |Z_{s}^i|^3_3 \big] < \infty $ and 
                 $E\big[\sup\limits_{s \leq T} |\tilde Z_{s}^i|^3_3 \big] < \infty $ for every $i$

         \item \label{assumpt:noColinearity} $$\lim_{a \rightarrow \infty} \sup_n 
         P\bigg( \sup_{s \leq T}  \Tr\Big( \big( \frac 1 n Z_{s}^{(n)\intercal } 
                 Y_{{s}}^{(n),A}Z_{s}^{(n)} \big)^{-1}\Big) \geq a   \bigg) = 0
                 $$ and 
                  $$\lim_{a \rightarrow \infty} \sup_n 
         P\bigg( \sup_{s \leq T}  \Tr \Big( \big( \frac 1 n  \tilde Z_{s}^{(n)\intercal } 
        Y_{{s}}^{(n),A} \tilde Z_{s}^{(n)} \big)^{-1}\Big) \geq a   \bigg) = 0
                 $$ 
 \end{enumerate}
Then we have that
\begin{equation}
        \lim_{n \rightarrow \infty} P \bigg( \sup_{t \leq T} |{R}_t^{(i,n)} - R^i_t
        | > \delta \bigg) = 0
\end{equation}
for every $\delta > 0$ and $i$.
\end{theorem}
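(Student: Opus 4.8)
The plan is to exploit that both $R^i$ and $R^{(i,n)}$ solve linear stochastic differential equations --- the ones characterising them in Section~\ref{section:additive hazard weight estimator}, with \eqref{eq:Restimator} being the equation for $R^{(i,n)}$ --- driven respectively by the semimartingales $K^i$ and $K^{(i,n)}$, and to read off the conclusion from the stability theory for such equations. Because the coefficient $x \mapsto x$ is Lipschitz, the consistency criterion for SDE-driven estimators recalled after \eqref{eq:SDE} (with $F=\mathrm{id}$, $B^{(n)}$ replaced by $K^{(i,n)}$ and $B$ by $K^i$) reduces the theorem to three statements: (i) $R^{(i,n)}_0 \to R^i_0$ in probability, which is assumed; (ii) $\sup_{t\le T}|K^{(i,n)}_t - K^i_t| \to 0$ in probability; and (iii) $\{K^{(i,n)}\}_n$ is P-UT. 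The bulk of the work is (ii) and (iii).

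First I would record what the hypotheses give for the ingredient regressions. Conditions \ref{assumpt:supL2}) and \ref{assumpt:noColinearity}) are precisely the assumptions under which the additive hazard estimators $H^{(n)}$ and $\tilde H^{(n)}$ are uniformly consistent for $H$ and $\tilde H$ and predictably uniformly tight (cf.\ \cite[VII.4]{Andersen} and Theorem~\ref{thm:AalenConsistency}). More useful for what follows: on an event of probability tending to $1$ one has decompositions $\tilde H^{(n)}_t = \tilde H_t + \int_0^t \xi^{(n)}_s\, ds + \tilde{\mathcal M}^{(n)}_t$, and similarly for $H^{(n)}$, where $\sup_{s\le T}|\xi^{(n)}_s| = O_P(n^{-1/2})$ and $\tilde{\mathcal M}^{(n)}$ is a local martingale whose predictable variation has entries of order $1/n$; it is at this last point that the third moments in \ref{assumpt:supL2}) are spent, $d\langle \tilde{\mathcal M}^{(n)}\rangle$ carrying a cubic combination of the covariates.

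The crux is the behaviour of the smoothed intensity-ratio estimator \eqref{eq:thetaTruncated}. I would prove that $\sup_{s\le T}|\theta^{(i,n)}_s|$ is bounded in probability uniformly in $n$, and that $\theta^{(i,n)}_s \to \theta^i_s$ in probability for Lebesgue-almost every $s\in[0,T]$. Fix $s>0$; for large $n$ one is in the second branch of \eqref{eq:thetaTruncated}, and inserting the decomposition above into the numerator $\kappa_n\int_{s-1/\kappa_n}^s Y^i_u\tilde Z^{i\intercal}_{u-}\, d\tilde H^{(n)}_u$ splits it as $\kappa_n\int_{s-1/\kappa_n}^s Y^i_u\tilde Z^{i\intercal}_{u-}\tilde h_u\, du + \kappa_n\int_{s-1/\kappa_n}^s Y^i_u\tilde Z^{i\intercal}_{u-}\xi^{(n)}_u\, du + \kappa_n\int_{s-1/\kappa_n}^s Y^i_u\tilde Z^{i\intercal}_{u-}\, d\tilde{\mathcal M}^{(n)}_u$, and symmetrically for the denominator. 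By continuity of $\tilde h$ and left-continuity of the covariate paths, the first piece converges for a.e.\ $s$ to $Y^i_s\tilde Z^{i\intercal}_s\tilde h_s = \tilde\lambda^{i,A}_s$ (a Lebesgue-differentiation argument); the second is at most $\sup_u|Y^i_u\tilde Z^i_u|\cdot O_P(n^{-1/2})$; and the third has conditional variance of order $\kappa_n^2\cdot(n\kappa_n)^{-1} = \kappa_n/n$. Both error pieces vanish in probability precisely because $\kappa_n \to \infty$ while $\sup_n \kappa_n/\sqrt n < \infty$ forces $\kappa_n/n\to 0$; this calibration is the linchpin. The denominator tends likewise to $\lambda^{i,A}_s$, which on $\{Y^{i,A}_s=1\}$ is bounded away from $0$ by combining \ref{item:thetaConditions}) and \eqref{eq:adhoc}: if $\tilde Z^{i\intercal}\tilde h = \tilde\lambda^{i,A}/Y^{i,A}$ is bounded below in probability and $\theta^i = \tilde\lambda^{i,A}/\lambda^{i,A}$ is uniformly bounded, then $\lambda^{i,A}/Y^{i,A}$ is bounded below; hence the ratio stays well defined with high probability and $\theta^{(i,n)}_s \to \tilde\lambda^{i,A}_s/\lambda^{i,A}_s = \theta^i_s$. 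For $s<1/\kappa_n$ one uses the assumed consistency of $\theta^{(i,n)}_0$ together with the right-continuity of $\theta^i$ at $0$ in \ref{item:thetaConditions}).

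It remains to assemble (ii) and (iii). For (ii), write $K^{(i,n)}-K^i = \int_0^\cdot(\theta^{(i,n)}_{s-}-\theta^i_{s-})\,dN^{i,A}_s + \int_0^\cdot Y^{i,A}_s Z^{i\intercal}_{s-}\, d(H^{(n)}-H)_s - \int_0^\cdot Y^{i,A}_s\tilde Z^{i\intercal}_{s-}\, d(\tilde H^{(n)}-\tilde H)_s$. The last two terms go to $0$ uniformly in probability by continuity of the stochastic integral, as $H^{(n)}-H$ and $\tilde H^{(n)}-\tilde H$ tend to $0$ uniformly and are predictably uniformly tight. For the first term, split it into its compensator $\int_0^\cdot(\theta^{(i,n)}_{s-}-\theta^i_{s-})\lambda^{i,A}_s\, ds$ and a local martingale with predictable variation $\int_0^\cdot(\theta^{(i,n)}_{s-}-\theta^i_{s-})^2\lambda^{i,A}_s\, ds$; by the a.e.\ convergence and uniform-in-$n$ boundedness from the previous paragraph, dominated convergence --- using $E_P[\int_0^T\lambda^{i,A}_s\, ds]<\infty$, which follows from \ref{assumpt:supL2}) --- sends both to $0$ in probability, Lenglart's inequality upgrading the martingale to uniform convergence. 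For (iii), apply \cite[Lemma 1]{ryalen2018transforming}: the drift of $K^{(i,n)}$ is $\int_0^\cdot(\theta^{(i,n)}_{s-}-1)\lambda^{i,A}_s\, ds$ plus two absolutely continuous terms whose densities are dominated up to constants by $|Z^i_s|$ and $|\tilde Z^i_s|$, so its supremum is bounded in probability uniformly in $n$ by the boundedness of $\theta^{(i,n)}$ together with \ref{assumpt:supL2}), giving \eqref{eq:tightRho}; and the trace of the predictable variation of its martingale part is dominated by $(\sup_s|\theta^{(i,n)}_s|+1)^2\int_0^T\lambda^{i,A}_s\, ds + \sup_s(|Y^{i,A}_sZ^i_s|^2+|Y^{i,A}_s\tilde Z^i_s|^2)\,\Tr\big(\langle\mathcal M^{(n)}\rangle_T+\langle\tilde{\mathcal M}^{(n)}\rangle_T\big)$, which is bounded in probability uniformly in $n$ since the traces are $O_P(1/n)$, giving \eqref{eq:PUTMart}. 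With (i) this finishes the proof. I expect the analysis of $\theta^{(i,n)}$ to be the main obstacle: one must simultaneously control the estimation error of $H^{(n)},\tilde H^{(n)}$ over a window shrinking at rate $1/\kappa_n$ and perform a Riemann-type approximation of $h,\tilde h$ on that window, and the argument only closes because the bandwidth shrinks strictly more slowly than the $n^{-1/2}$ estimation rate, while the ratio structure of \eqref{eq:thetaTruncated} forces the empirical denominator to stay uniformly away from $0$ --- which is exactly what the positivity-type conditions \ref{item:thetaConditions}) and \eqref{eq:adhoc} deliver.
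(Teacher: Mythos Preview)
Your overall architecture is the paper's: reduce via SDE stability to (ii) $\sup_t|K^{(i,n)}_t-K^i_t|\to 0$ in probability and (iii) $\{K^{(i,n)}\}_n$ P-UT, then handle the three summands of $K^{(i,n)}-K^i$ separately, with Lenglart for the $dN^{i,A}$-integral. This is exactly Lemma~\ref{lemma:K(ni)} followed by \cite[Theorem~IX.6.9]{JacodShiryaev}.

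Two points on the $\theta^{(i,n)}$ analysis. First, a small correction: for the additive hazard estimator $H^{(n)}-H$ is a pure local martingale (the paper writes $W^{(n)}:=\sqrt n(H^{(n)}-H)$ and calls it a square-integrable martingale), so your drift term $\xi^{(n)}$ is identically zero. This only simplifies things.

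Second, and this is a genuine gap: you claim $\sup_{s\le T}|\theta^{(i,n)}_s|$ is tight, and you need it --- both for the dominated-convergence step sending $\int_0^T|\theta^{(i,n)}_{s-}-\theta^i_s|\lambda^{i,A}_s\,ds$ to zero and for \eqref{eq:tightRho} in the P-UT verification --- but the argument you sketch is only pointwise. Your Lebesgue-differentiation and variance bounds show that, for each fixed $s$, the empirical numerator and denominator converge to $\tilde\lambda^{i,A}_s$ and $\lambda^{i,A}_s$; combined with the lower bound on $\lambda^{i,A}$ this gives $\theta^{(i,n)}_s$ bounded in probability at each $s$, not uniformly in $s$. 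Nothing in your argument prevents the empirical denominator $\kappa_n\int_{t-1/\kappa_n}^t Y^{i,A}_sZ^{i\intercal}_{s-}\,dH^{(n)}_s$ from dipping near zero somewhere in $[1/\kappa_n,T]$ as $t$ varies. The paper closes this in Lemma~\ref{lem:sqrnderivative} by a different device: since $W^{(n)}$ converges in law by the martingale CLT and is P-UT, $\int_0^\cdot Y^{i,A}Z^{i\intercal}\,dW^{(n)}$ is C-tight, and \cite[Proposition~VI.3.26]{JacodShiryaev} then forces its increment over any window of width $1/\kappa_n$ to vanish \emph{uniformly} in $t$; the mean value theorem handles the deterministic part. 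With uniform convergence of numerator and denominator in hand, the continuous mapping theorem together with \eqref{eq:adhoc} and \ref{item:thetaConditions}) gives $\sup_t|\theta^{(i,n)}_t-\theta^i_t|\to 0$ directly, which delivers both the tightness and the pointwise convergence you need. Your direct variance bound $\kappa_n/n$ could in principle be upgraded with a maximal inequality over $t$, but as written the uniform control is missing.
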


For Theorem \ref{thm:AalenConsistency} to apply we need that our additive hazard weight estimator and the likelihood-ratio are uniformly bounded. The latter will for instance be the case if both $\lambda^{i,A}-\tilde{\lambda}^{i,A}$ and $\tilde{\lambda}^{i,A}/\lambda^{i,A}$ are uniformly bounded. We will, however, only assume that the theoretical weights $R^i$ are uniformly bounded. In that case we can also make our weight estimator $R^{(i,n)}$ uniformly bounded, by merely truncating trajectories that are too large.

      

\section{{Example}}
\label{section: Example}

\subsection{Software}
We have developed \texttt{R} software for estimation of continuous-time MSMs that solve ordinary differential equations, in which additive hazard models are used to model both the time to treatment and the time to the outcome of interest. Our procedure involves two steps: first, we estimate continuous-time weights using fitted values of the treatment model. These weights can be used to re-weight the sample for estimating the outcome model. Second, we take the cumulative hazard coefficients of the weighted (or causal) outcome model and transform them to estimate ODE parameters that have a more appealing interpretation than cumulative hazards. The two steps can be performed using the \texttt{R} packages \texttt{ahw} and \texttt{transform.hazards}, both of which are available in the repository \texttt{github.com/palryalen}. Below, we show an example on how to use the packages on simulated data.

\subsection{A simulation study}
We simulate an observational study where individuals may experience a terminating event $D$, so that the hazard for $D$ depends additively on the treatment $A$ and a covariate process $L$. $A$ and $L$ are counting processes that jump from 0 to 1 for an individual at the instant treatment is initiated or the covariate changes, respectively. The subjects receive treatment depending on $L$, such that $L$ is a time-dependent confounder. The subjects in the $L=1$ group can move into treatment, while the subjects in the $L=0$ group may receive treatment or move to the $L=1$ group in any order. All subjects are at risk of experiencing the terminating event. The following data generating hazards for $D, A,$ and $L$ are utilised:
\begin{align}
	\alpha_t^D &= \alpha_t^{D|0} + \alpha_t^{D|A}A_{t-} + \alpha_t^{D|L}L_{t-} + \alpha_t^{D|A,L} A_{t-} L_{t-} \label{eq:example outcome hazard} \\
    \alpha_t^A &= \alpha_t^{A|0} + \alpha_t^{A|L}L_{t-}  \label{eq:example treatment hazard}\\
    \alpha_t^L &= \alpha_t^{L|0} + \alpha_t^{L|A}A_{t-}. \notag 
\end{align}

We want to assess the effect of $A$ on $D$ we would see if $A$ were randomised, i.e.\ if treatment initiation did not depend on $L$. To find the effect $A$ has on $D$ we perform a weighted analysis.

We remark that this scenario could be made more complicated by e.g.\ allowing the subjects to move in and out of treatment, or have recurrent treatments. We could also have included a dependent censoring process, and re-weighted to a hypothetical scenario in which censoring were randomised (see Section \ref{sec:censoring weights}).

\subsection{Weight calculation using additive hazard models}
\label{section: ahw description}
We assume that the longitudinal data is organised such that each individual has multiple time-ordered rows; one row for each time either $A$, $L$ or $D$ changes.

Our goal is to convert the data to a format suitable for weighted additive hazard regression. Heuristically, the additive hazard estimates are cumulative sums of least square estimations evaluated at the event times in the sample. The main function will therefore need to do two jobs; a) the data must be expanded such that every individual, as long as he is still at risk of $D$, has a row for each time $D$ occurs in the population, and b) each of those rows must have an estimate of his weight process evaluated just before that event time.

Our software relies on the \texttt{aalen} function from the \texttt{timereg} package. We fit two additive hazard models for the transition from untreated to treated. The first model assesses the transitions that we observe, i.e.\ where treatment is influenced by a subjects realisation of $L$. Here, we use \eqref{eq:example treatment hazard}, i.e.\ the true data generating hazard model for treatment initiation; an additive hazard model with intercept and $L$ as a covariate. The second model describes the transitions under the hypothetical randomised trial in which each individual's treatment initiation time is a random draw of the treatment initiation times in the population as a whole. The treatment regime in our hypothetical trial is given by the marginal treatment initiation hazard of the study population, which is the hazard obtained by integrating out $L$ from \eqref{eq:example treatment hazard}. We estimate the cumulative hazard using the Nelson-Aalen estimator for the time to treatment initiation, by calling a marginal \texttt{aalen} regression.

In this way we obtain a factual and a hypothetical \texttt{aalen} object that are used as inputs in our \texttt{makeContWeights} function. Other input variables include the bandwidth parameter used in \eqref{eq:thetaTruncated}, weight truncation options, and an option to plot the weight trajectories.

The output of the \texttt{makeContWeights} function is an expanded data frame where each individual has a row for every event time in the population, with an additional \texttt{weight} column containing time-updated weight estimates. To do a weighted additive hazard regression for the outcome, we will use the \texttt{aalen} function once again. Weighted regression is performed on the expanded data frame by setting the weights argument equal to the weight column.

When the weighted cumulative hazard estimates are at hand, we can transform our cumulative hazard estimates as suggested in Section \ref{section:transforming hazards}, to obtain effect measures that are easier to interpret. This step can be performed using the \texttt{transform.hazards} package; see the GitHub vignette for several worked examples.

\subsection{A marginal structural model}
\label{section:a marginal structural model}
We now suppose the intervention that imposes a marginal treatment initiation rate is causally valid. This implies that the intensity for the event $D$ has the same form under the randomised scenario $\tilde P$, i.e.\ that the hazard for $D$ under $\tilde P$ for the filtration $\mathcal{F}_t^{A \cup D \cup L}$, generated by $A,D,$ and $L$, takes the same functional form as \eqref{eq:example outcome hazard}. We are, however, interested in the hazard with respect to $\tilde P$ and the subfiltration $\mathcal{F}_t^{A \cup D}$, the filtration generated by $A$ and $D$ (note that $\mathcal{F}_t^{A \cup D \cup L}$ and $\mathcal{F}_t^{A \cup D}$ respectively correspond to $\mathcal{F}_t^{\mathcal{V}_0 \cup \mathcal{L}}$ and $\mathcal{F}_t^{\mathcal{V}_0}$ from Section \ref{section:Hypothetical scenarios and likelihood ratios}). By the innovation theorem the hazard with respect to $\tilde P$ and $\mathcal{F}_t^{A \cup D}$ takes the form
\begin{align*}
    \beta(t | A) = \beta_t^0 + \beta_t^A A_{t-}.
\end{align*}
A straightforward regression analysis of the observational data cannot yield causal estimates. Using the ideas from Section \ref{section: Derivations of additive hazards}, we can estimate the cumulative coefficients $B_t^{A=0} = \int_0^t \beta_s^0 ds$ and $B_t^{A=1} - B_t^{A=0} = \int_0^t \beta_s^A ds$ consistently by performing a weighted additive hazard regression.

Cumulative hazards, however, are not easy to interpret. We therefore assess effects on the survival scale, using a marginal structural relative survival model. In this example, our marginal structural relative survival $RS^A$ solves 
\begin{align}
	RS_t^{A=a} = 1 + \int_0^t \begin{pmatrix}
		-RS_{s}^{A=a} & RS_{s}^{A=a}
	\end{pmatrix}
    d\begin{pmatrix}
    	B_s^{A=a} \\ 
        B_s^{A=0}
    \end{pmatrix}. \label{eq: MSM relative survival}
\end{align}
The quantity $RS^{A=1}$ can be understood as the survival probability a subject would have if he were exposed at time 0, relative to the survival probability he would have if he were never exposed. Our suggested plugin-estimator is obtained by inserting the estimated causal cumulative coefficients, i.e.\ the weighted estimates $\hat B^{A=a}$ and $\hat B^{A=0}$:
\begin{align*}
	\hat{RS}_t^{A=a} &=  1 + \int_0^t \begin{pmatrix}
		-\hat{RS}_{s-}^{A=a} & \hat{RS}_{s-}^{A=a}
	\end{pmatrix}
    d\begin{pmatrix}
    	\hat B_s^{A=a} \\ 
        \hat B_s^{A=0}
    \end{pmatrix}.
\end{align*}

\subsection{Simulation details and results}

We simulate subjects, none of which are treated at baseline. Initially, all the patients start with $L = 0$, and the hazards for transitioning from one state to another is constant. As described in Section \ref{section: ahw description}, we fit additive hazard models for the time to treatment initiation, one for the observed treatment scenario, i.e.\ \eqref{eq:example treatment hazard}, and one for the hypothetical randomised scenario. These models are inserted into \texttt{makeContWeights} to obtain weight estimates. Finally, we estimate the additive hazard model by calling the \texttt{aalen} function where the \texttt{weights} option is set equal to the weight column in the expanded data set.

We make comparisons to the discrete-time, stabilised IPTWs, calculated using pooled logistic regressions. To do this, we discretise the study period $[0,10]$ into $K$ equidistant subintervals, and include the time intervals as categorical variables in the regressions. We fit two logistic regressions; one for the weight numerator, regressing only on the intercept and the categorical time variables, and a covariate-dependent model for the weight denominator, regressing on the intercept, the categorical time variables, and the time-updated covariate process. We then calculate IPTWs by extracting the predicted probabilities of the two logistic regression model fits, and inserting them into the cumulative product formula \cite[eq. (17)]{robins2000marginal}. 

In the upper three rows of Figure \ref{fig:treatmentEffect} we display estimates of the causal cumulative hazard coefficient, i.e.\ estimates of $B^{A=1} - B^{A=0}$, for a range of sample sises. We include estimates weighted according to our estimator \eqref{eq:Restimator}, the IPTW estimator, and the theoretical weights, i.e.\ the true likelihood ratios $\{ R^i\}_i$. Compared to the discrete weight estimators, our continuous-time weight estimator \eqref{eq:Restimator} gives better approximations to the curves that are estimated with the theoretical weights. In the lowest row of Figure \ref{fig:treatmentEffect} we plot $\hat {RS}^{A=1}$, i.e.\ transformed estimates of the cumulative hazard coefficients re-weighted according to the different weight estimators. We used the \texttt{transform.hazards} package to perform the plugin-estimation.




      \begin{figure}
\setlength{\lineskip}{1ex}
\subfloat{\includegraphics[width=.3\textwidth]{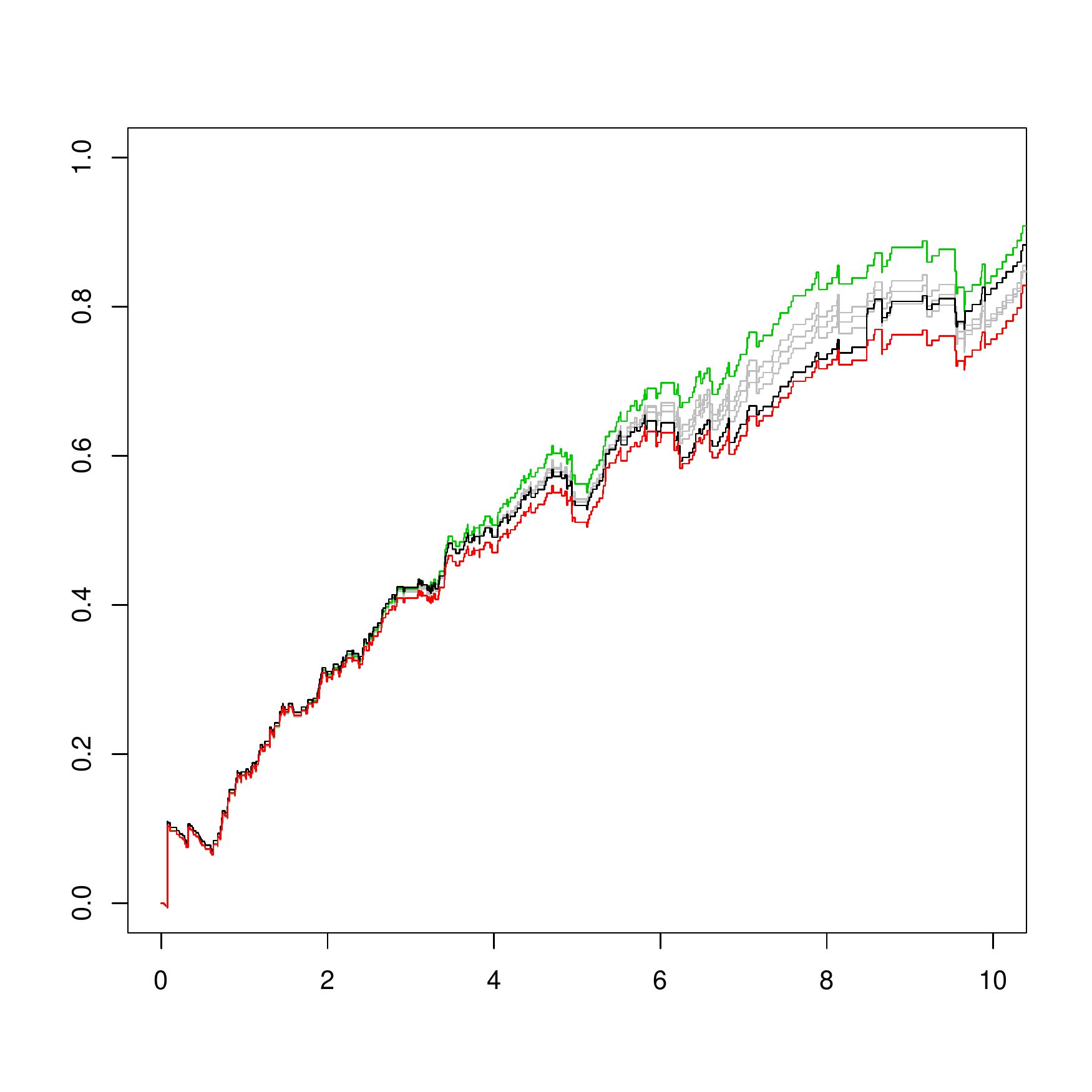}}%
\hspace{.03\textwidth}%
\subfloat{\includegraphics[width=.3\textwidth]{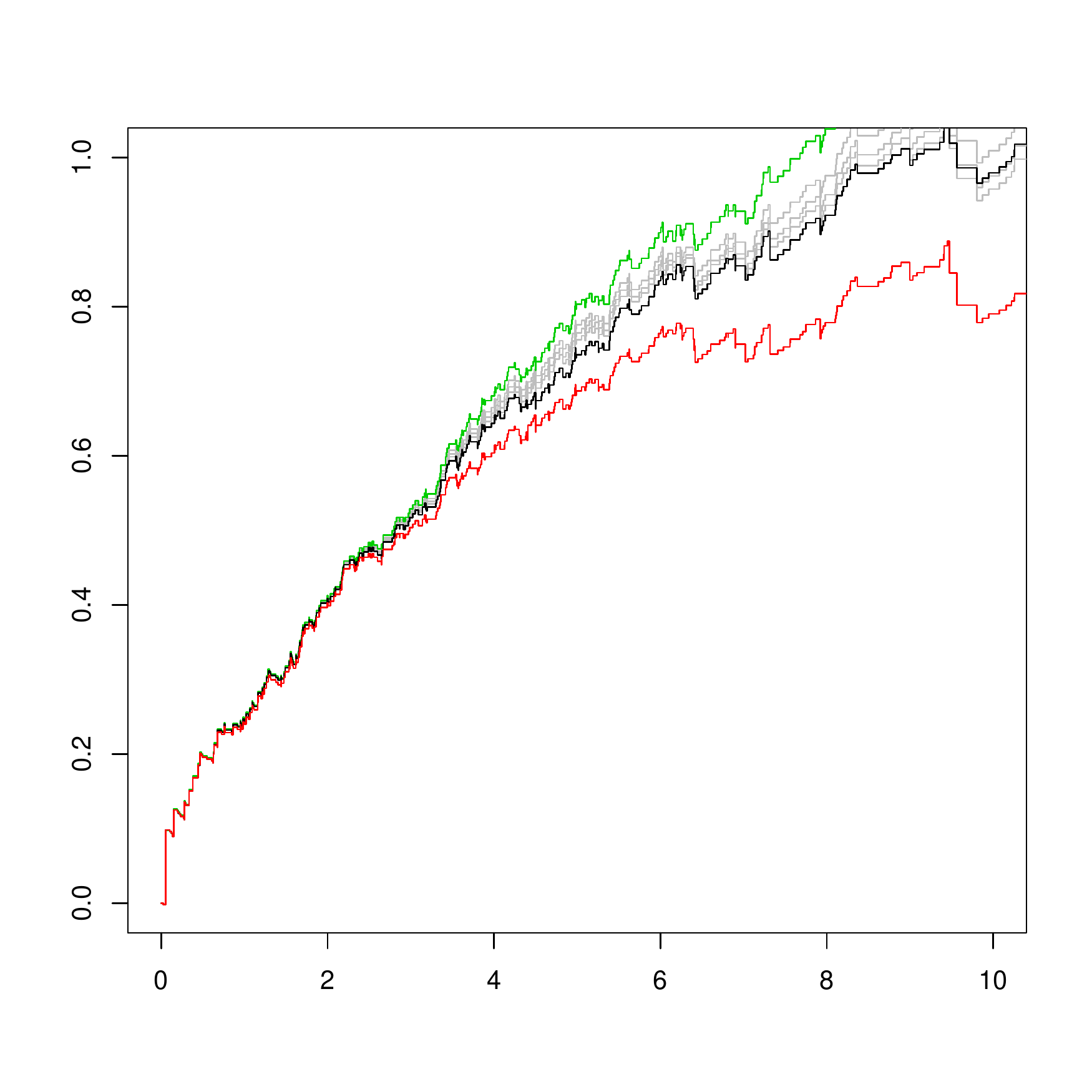}}%
\hspace{.03\textwidth}%
\subfloat{\includegraphics[width=.3\textwidth]{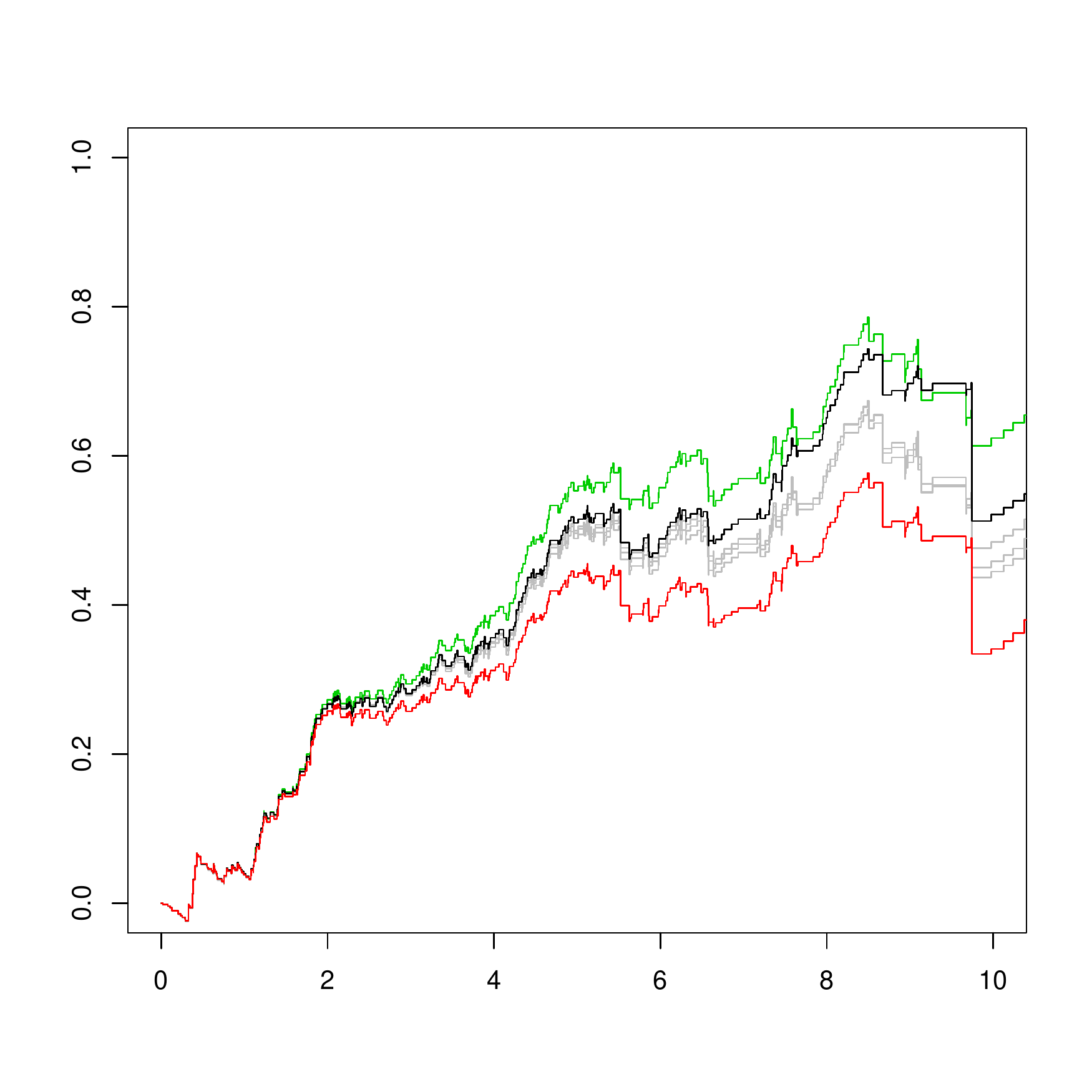}}%
\hspace{.03\textwidth}%
\subfloat{\includegraphics[width=.3\textwidth]{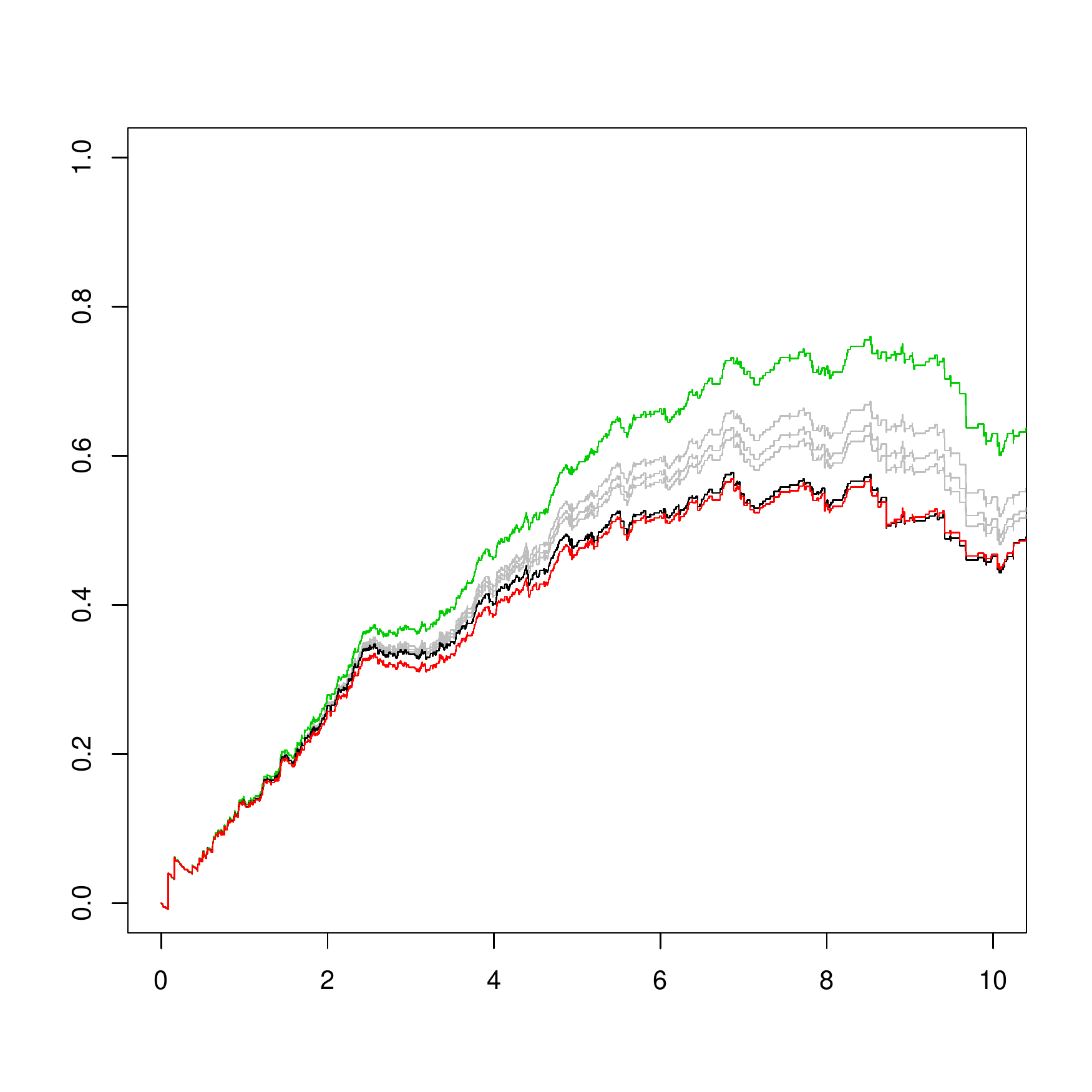}}%
\hspace{.03\textwidth}%
\subfloat{\includegraphics[width=.3\textwidth]{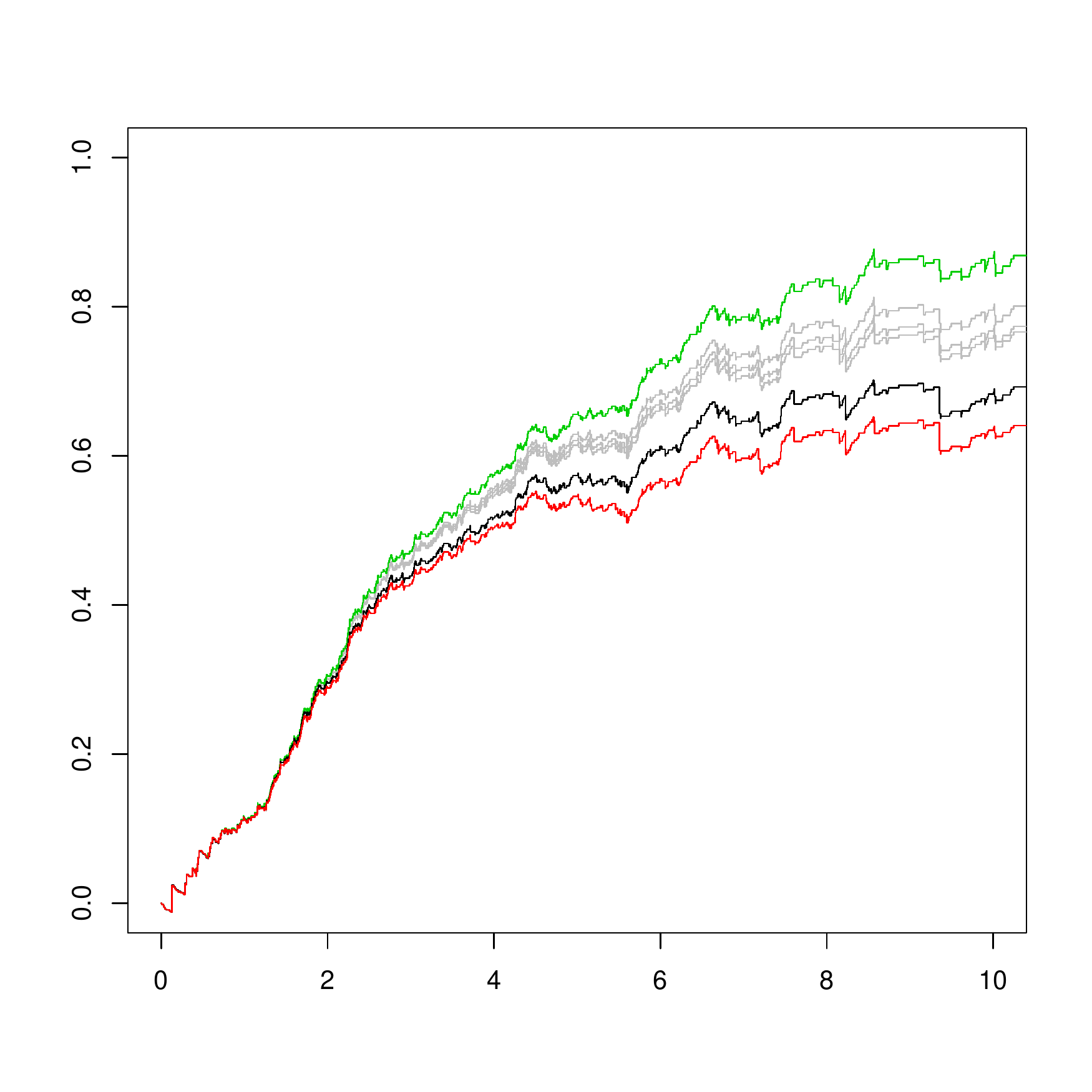}}%
\hspace{.03\textwidth}%
\subfloat{\includegraphics[width=.3\textwidth]{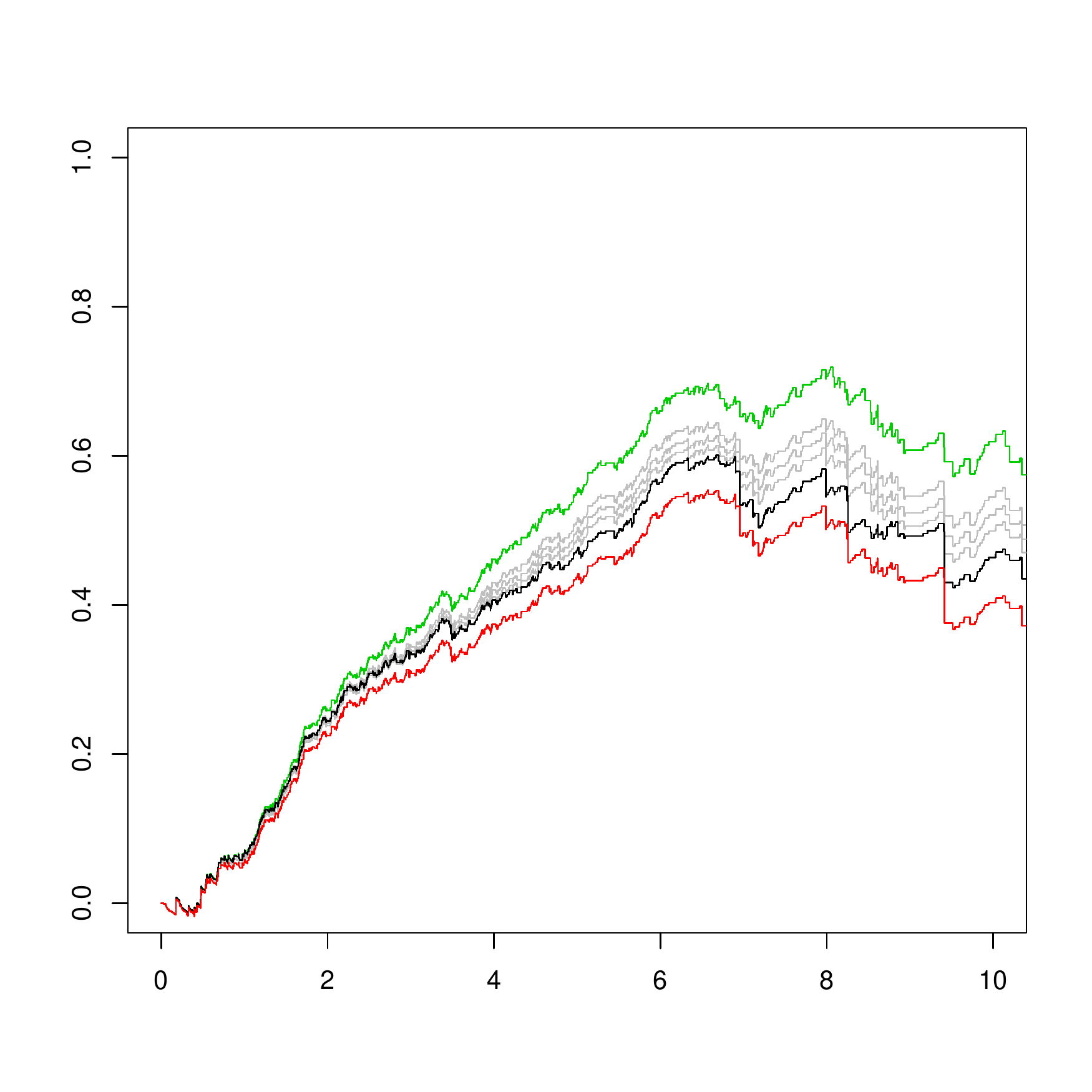}}%
\hspace{.03\textwidth}%
\subfloat{\includegraphics[width=.3\textwidth]{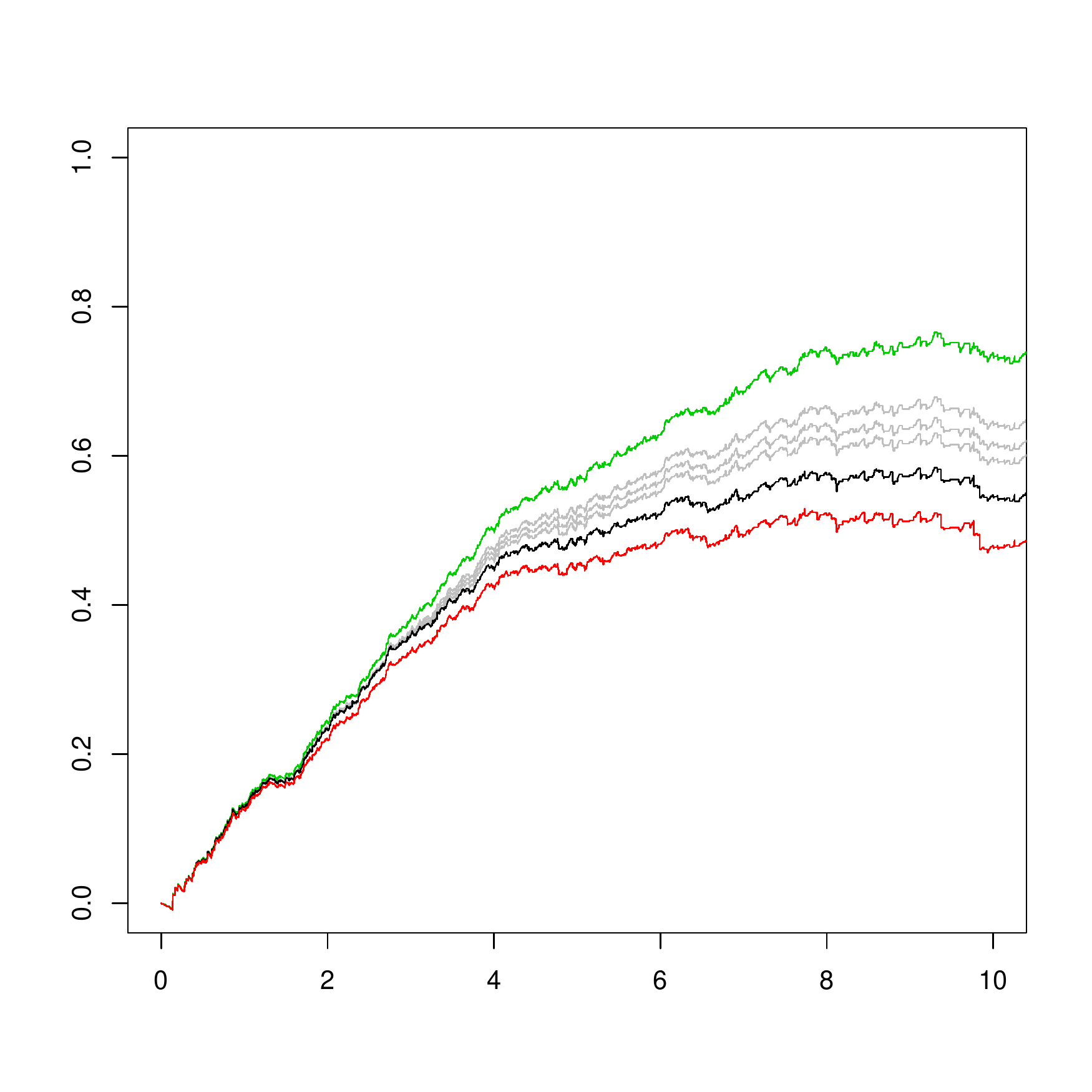}}%
\hspace{.03\textwidth}%
\subfloat{\includegraphics[width=.3\textwidth]{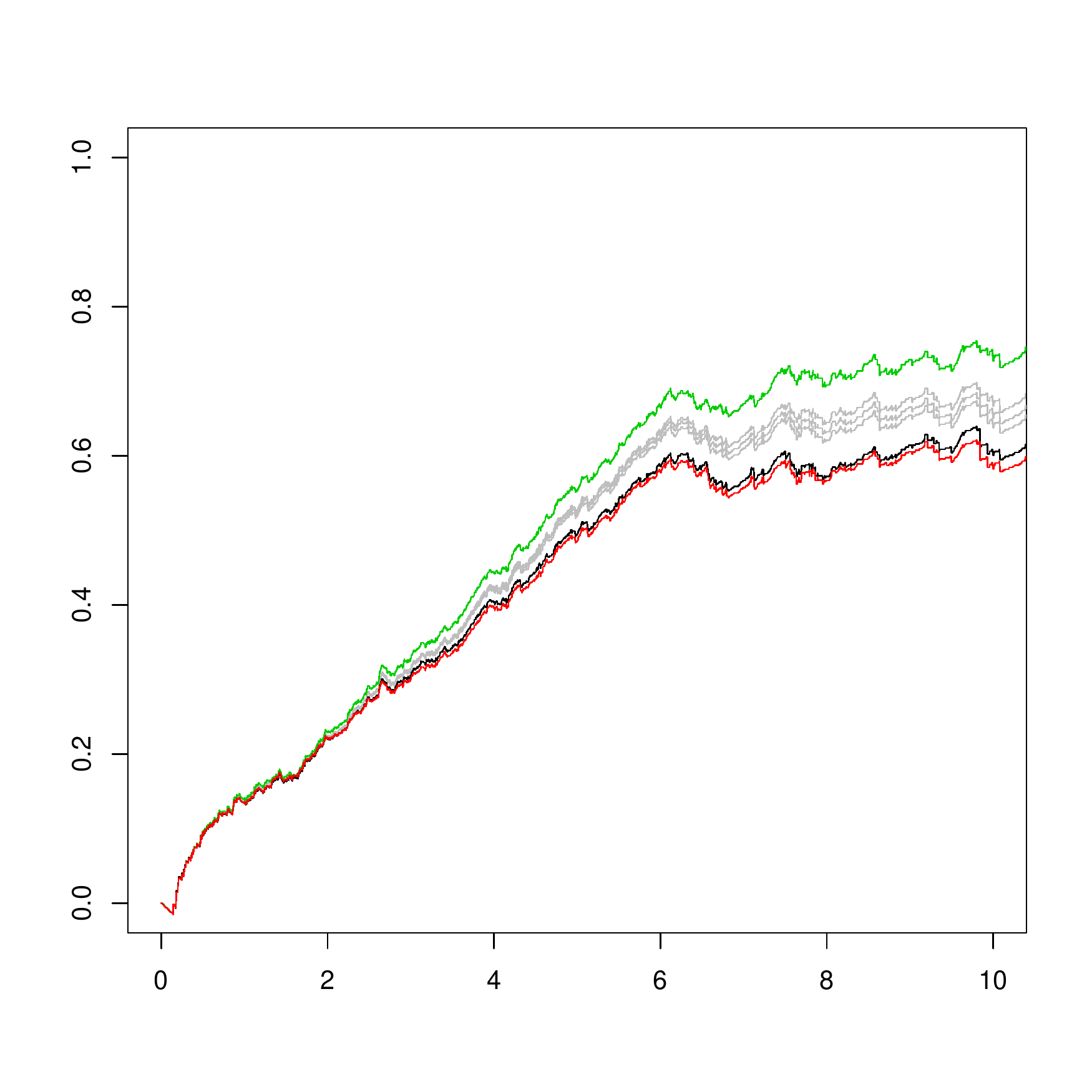}}%
\hspace{.03\textwidth}%
\subfloat{\includegraphics[width=.3\textwidth]{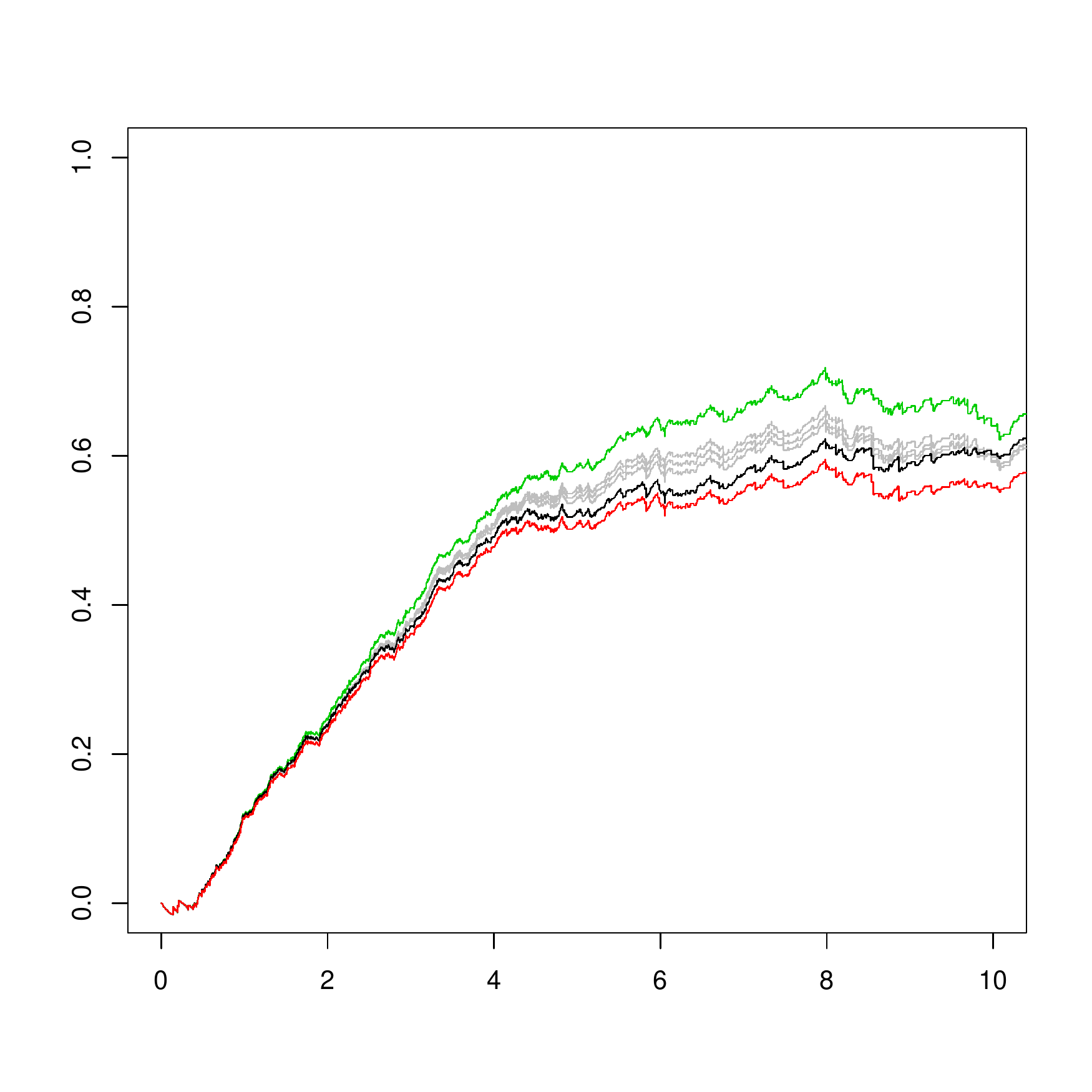}}%
\hspace{.03\textwidth}%
\subfloat{\includegraphics[width=.3\textwidth]{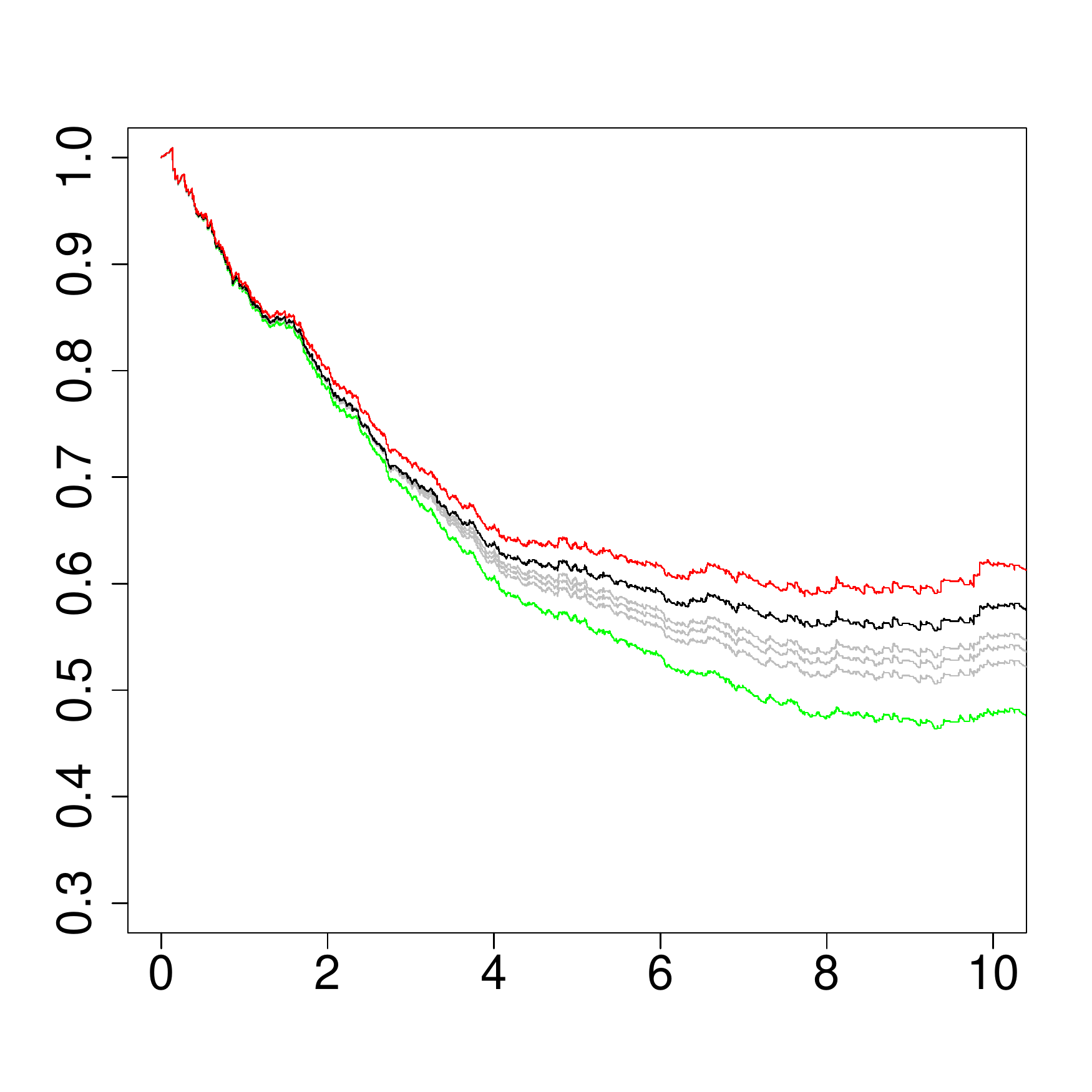}}%
\hspace{.03\textwidth}%
\subfloat{\includegraphics[width=.3\textwidth]{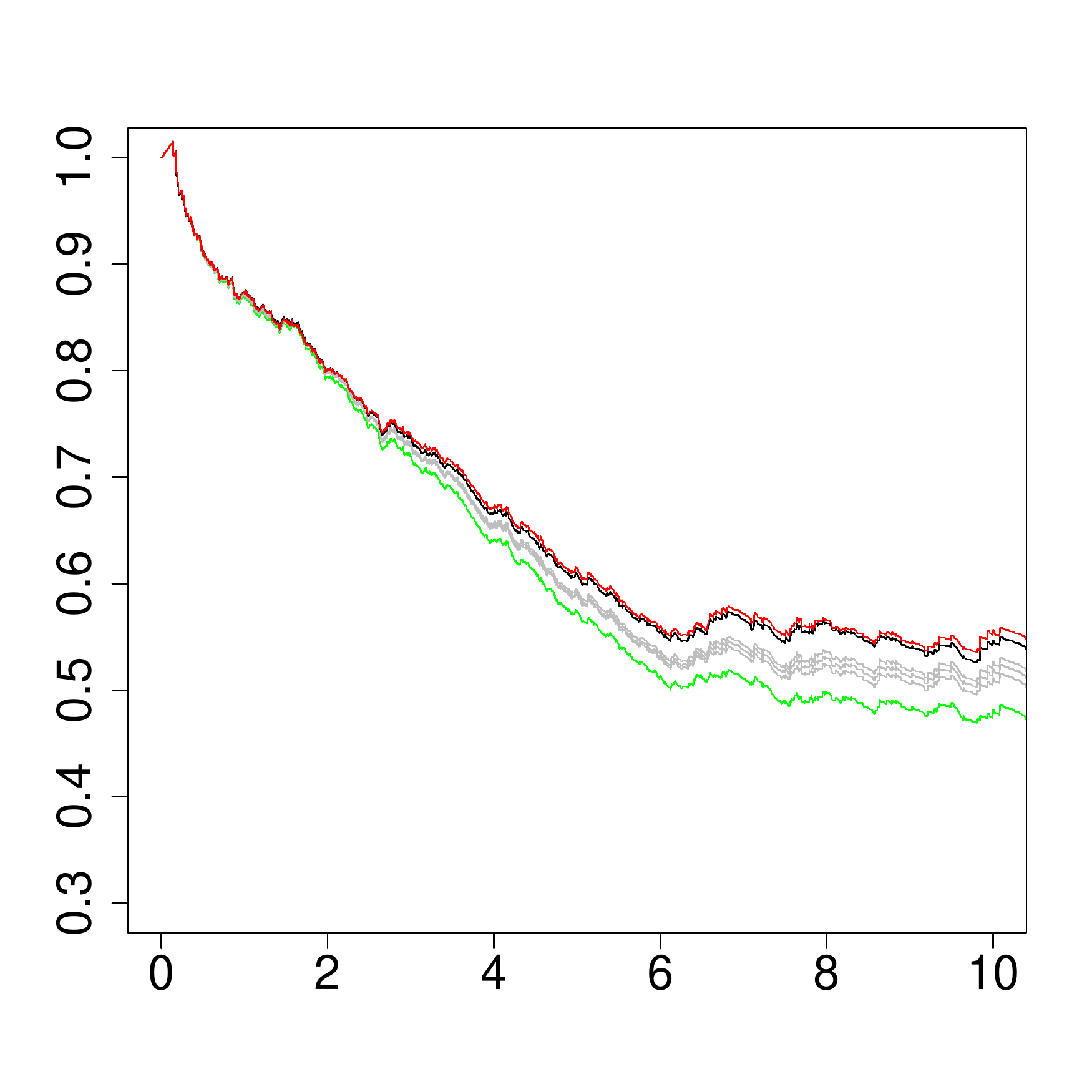}}%
\hspace{.03\textwidth}%
\subfloat{\includegraphics[width=.3\textwidth]{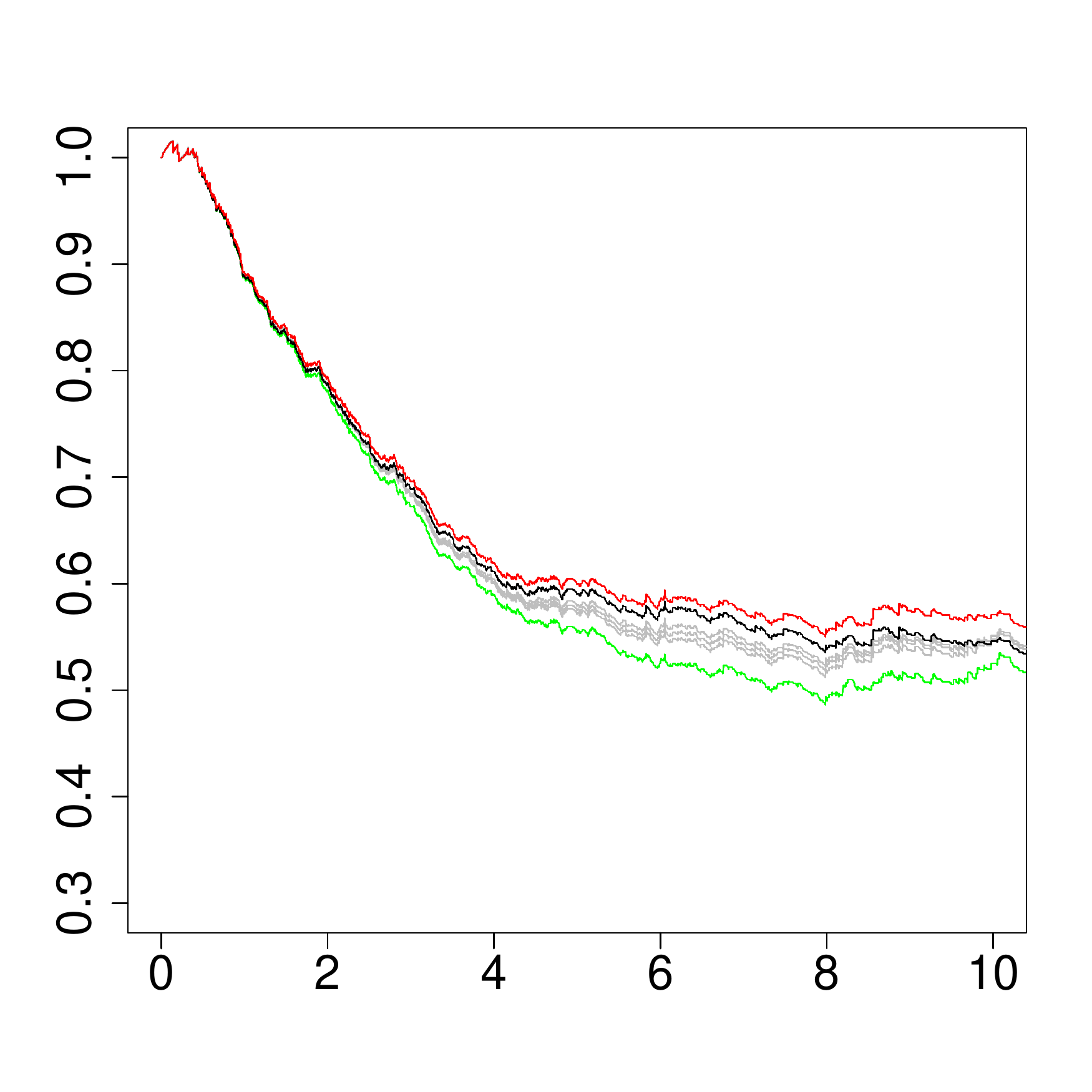}}%
\hspace{.03\textwidth}%
 \caption{The upper three rows: Three realisations of the cumulative treatment effect estimates for the same scenario, with $n=$ 500, 1000, and 2000 from top to bottom. A red line based on estimates re-weighted with the true $R^i$'s is included for reference. The green line shows the unweighted estimates, the gray lines are obtained using the IPTW estimates, while the black line is obtained using our additive hazard weight estimates. The discrete weights were estimated using pooled logistic regressions based on $K = $ 4, 8, and 16 time intervals. Increasing the number of intervals moved the curves closer to the red curve. The lowermost row: Estimated causal effect of being treated at $t=0$ versus never being treated according to the relative survival MSM, based on the $n=2000$ sample.}
 \label{fig:treatmentEffect}
\end{figure}

 \begin{figure}
\includegraphics[width=\textwidth]{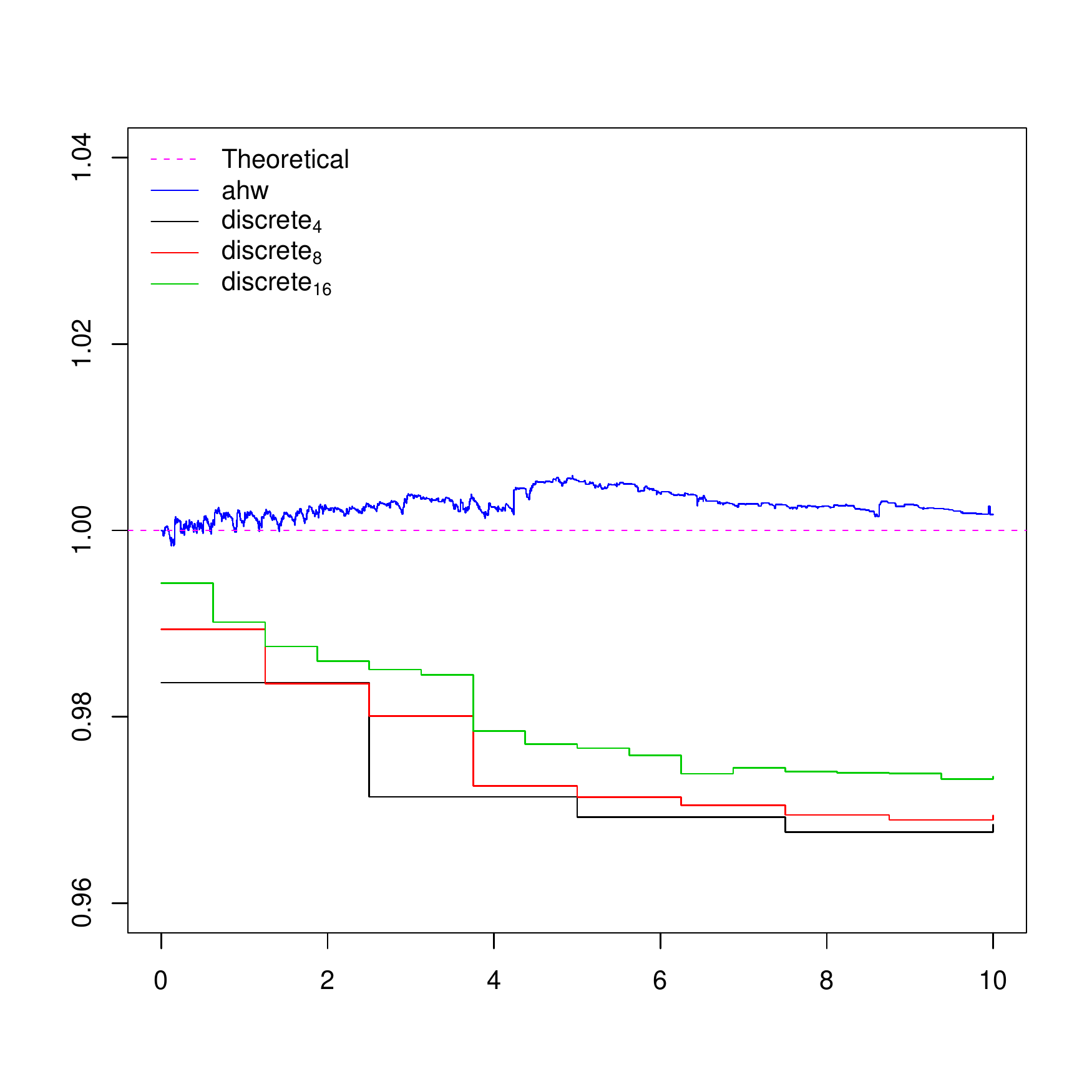}
 \caption{Average weights based on a sample size of 3000. The theoretical weights have expected value 1. Included are our additive hazard weights, as well as IPTW with $K = $ 4, 8, and 16 time intervals. We see that the discrete weights are biased approximations of the theoretical likelihood ratio, while our additive hazard weight estimator appears to be less biased.}
 \label{fig:discreteWeightBias}
\end{figure}

\section{Performance}

In Figure \ref{fig:discreteWeightBias} we plot mean weight estimates based on aggregated simulations of the set-up in Section \ref{section: Example}. The plot suggests that the discrete weights gradually approximate the continuous likelihood ratio as the time discretisation is refined. However, the continuous-time weights \eqref{eq:Restimator} are closer to the expected value of 1 at all times $t$, indicating less bias.

Choosing the bandwidth parameter will influence the weight estimator and weighted additive hazard estimator in a bias-variance tradeoff; a small $\kappa_n$ will yield estimates with large bias and small variance, while a large $\kappa_n$ will give rise to small bias but large variance. It is difficult to provide an exact recipe for choosing the bandwidth parameter, since a good choice depends on several factors, such as the sample size, the distribution of the treatment times, as well as the form and complexity of the true treatment model: If the true treatment hazard is constant, a small $\kappa_n$ is often appropriate. If the treatment hazard is highly time-varying, $\kappa_n$ should be chosen to be large, depending on the sample size. Heuristically, several treatment times in the interval $[t-1/\kappa_n, t]$ for each $t$ would be desirable, but this is not possible in every situation, e.g.\ when the treatment time distribution is skewed. Such distributions can lead to instable, and possibly large weights for some subjects, even if the chosen bandwidth parameter is a good choice for most other subjects. One option is to truncate weights that are larger than a specified threshold, at the cost of introducing bias. We can assess sensitivity concerning the choice of the bandwidth by performing an analysis for several bandwidth values, truncating weights if necessary, and comparing the resulting weighted estimators. This approach was taken in \cite[see e.g.\ Supplementary Figure 4]{ryalen2018pcancer}, where no noticeable difference was found for four values of $\kappa_n$.

We inspect the bias and variance of our weight estimator for sample sizes $n$ under four bandwidth choices $\kappa_n^z$, $z=1,2,3,4$ at a specified time $t_0$. By aggregating estimates of $k$ samples for each $n$ we get precise estimates of the bias and variance as a function of $n$ for each choice. The bandwidth functions are scaled such that they are identical at the smallest sample $n_0$, with $\kappa_{n_0}^1 = \kappa_{n_0}^2 = \kappa_{n_0}^3 = \kappa_{n_0}^4 = 1/t_0$. Otherwise they satisfy $\kappa_n^1 \propto n^{1/2}, \kappa_n^2 \propto n^{1/3},    \kappa_n^3 \propto n^{1/5},$ and $\kappa_n^4 \propto n^{1/10}.$

We simulate a simple scenario where time to treatment initiation depends on a binary baseline variable, such that $\lambda_t^{i,A} = Y_t^{i,A} ( \alpha_t^{0} + \alpha_t^A x^i)$ for individual $i$ with at-risk indicator $Y^{i,A}$ and binary variable $x^i$. We calculate weights that re-weight to a scenario where the baseline variable has been marginalised out, i.e.\ where the treatment initiation intensity is marginal. Utilising the fact that the true likelihood ratio $R^i$ has a constant mean equal to 1, we can find precise estimates of the bias and variance of the additive hazard weight estimator \eqref{eq:Restimator} at time $t_0$.

We plot the bias and variance of the weight estimator as a function of $n$ under the strategies $\kappa_n^1, \kappa_n^2, \kappa_n^3$ and $\kappa_n^4$ in Figure \ref{fig:biasvariance}. We see that the convergence strategy $\kappa^1_n$ yields a faster relative decline in bias, but a higher variance as the sample size increases. Meanwhile, the strategy $\kappa^4_n$ has a slower decline in bias, but a smaller variance than the other strategies. Finally, the strategies $\kappa_n^2$ and $\kappa_n^3$ lie mostly between $\kappa_n^1$ and $\kappa_n^4$ both concerning bias and variance, as a function of the sample size. We also see empirical justification for the requirement $\sup_n \kappa_n/n^{1/2} < \infty$, as the variance under the strategy $\kappa_n^1$ declines very slowly as $n$ is increased.

\begin{figure}
\centering
\setlength{\lineskip}{1ex}
\subfloat{\includegraphics[width=.6\textwidth]{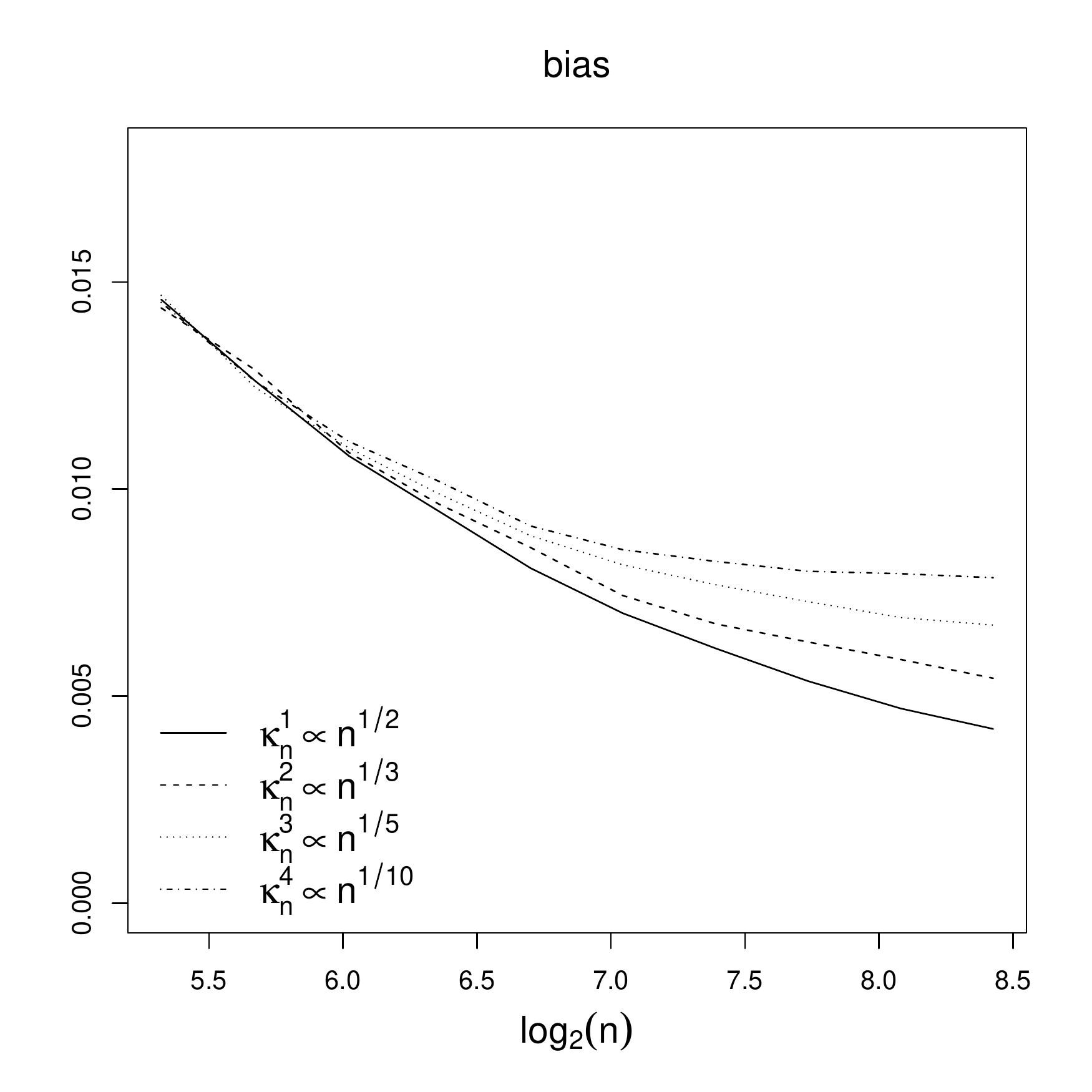}}%
\subfloat{\includegraphics[width=.6\textwidth]{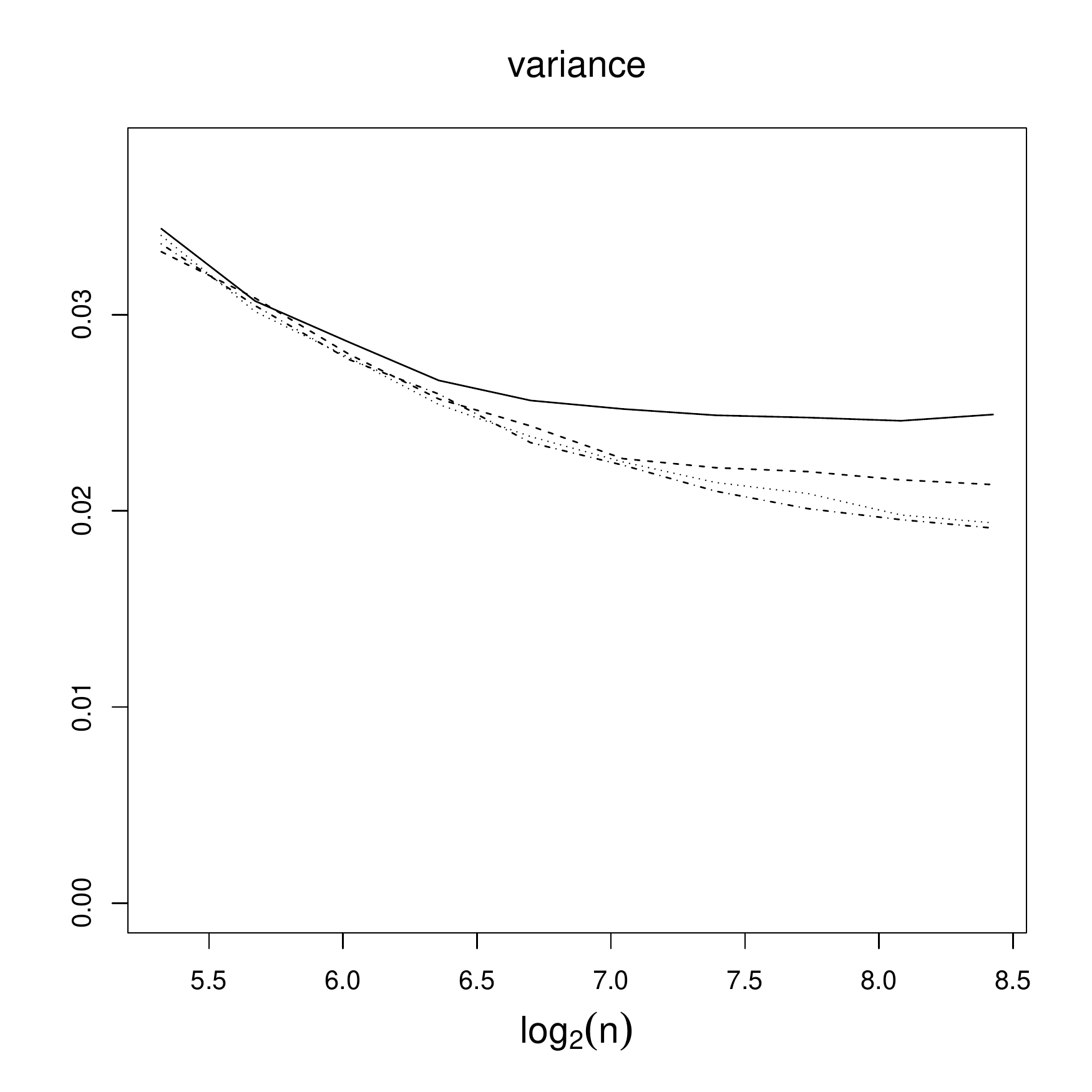}}%
 \caption{Bias and variance as a function of $n$, for four bandwidth refinement strategies.}
\label{fig:biasvariance}
\end{figure}

\section{Censoring weights}\label{sec:censoring weights}

                            Most standard martingale-based estimators in survival
                      analysis are consistent when we have independent censoring, see \cite[III.2.1]{Andersen}. 
              We have assumed independent censoring when
                      conditioning on $\mathcal V_0$. 
              A likely situation where this is violated is when we have  independent censoring when conditioned on  $\mathcal L \cup \mathcal V_0$, but have
                      dependent censoring if we only condition on $\mathcal V_0$.  
              If the model is causal with respect to an intervention
                      that randomises censoring sufficiently, we can model the scenario where this intervention had been applied, and censoring is independent when conditioning on $\mathcal V_0$. 
                 This means that many estimators that are common in
                         survival analysis will be consistent. 
                 Suppose that $N^{i,c}$ is a counting process that jumps when
                         individual $i$ is censored. 
                 Moreover, let $\lambda_t^{i,c}$ denote the 
                         intensity of $N^{i,c}$ 
                         with respect to the filtration $\mathcal F_t
                         ^{i,\mathcal V_0 \cup \mathcal L}$, and let 
                          $\tilde \lambda_t^{i,c}$ denote its 
                         intensity of 
                         with respect to the filtration $\mathcal F_t
                         ^{i,\mathcal V_0}$.

                 Suppose that there is a meaningful intervention that 
                       would give a scenario with frequencies that are governed 
                       by $\tilde P$ and its intensity for censoring with
                       respect to  $\mathcal F_t
                         ^{i, \mathcal V_0 \cup \mathcal L}$, is replaced by 
                       $\tilde \lambda^{i,c}_t$.
                 If the model is causal with respect to this intervention, the corresponding 
                         likelihood ratio process is given by
                         \begin{equation}
                                 R_t^{i,c}  = \prod_{s  \leq t} 
                                \Big( \frac{\tilde \lambda_s^{i,c}}
                                {\lambda_s^{i,c} }\Big)^{\Delta N_s^{i,c} }
                                 \exp \big( -\int_0^t \tilde \lambda_s^{i,c} -
                                 \lambda_s^{i,c}  ds  \big). 
                         \end{equation}
                 However, as we only need to apply weights to
                         observations strictly before the time of censoring, we
                         only need to consider
                         \begin{equation}
                                 R^{i,c}_t = \exp \big( -\int_0^t \tilde
                                 \lambda_s^{i,c} -
                                 \lambda_s^{i,c}  ds  \big). 
                         \end{equation}
                 This process is a solution to the equation
                         \begin{equation}
                                 R^{i,c}_t = 1 + \int_0^t   R_{s}^{i,c}(
                                 \lambda_s^{i,c} -
                                 \tilde \lambda_s^{i,c})  ds.  
                         \end{equation}
                 Furthermore, we assume additive hazard models, i.e.\ that
                 \begin{equation}
                         \lambda_t^c = Y^{i,c}  U_{s-
                         }^{i\intercal} g_t 
                         \text{ and }  
                          \tilde \lambda_t^{i,c} = Y^{i,c}
                          \tilde U_{s-}^{i\intercal}
                           \tilde g_t,
                 \end{equation}
                 for an $\mathcal F_t^{i,\mathcal V_0 \cup \mathcal L}$-adapted covariate process 
                 $U_t^i $, and 
                 an $\mathcal F_t^{i,\mathcal V_0 }$-adapted covariate process 
                 $\tilde U_t^i $, and vector valued functions $g$ and $\tilde g$. 
         Following Theorem \ref{thm:ahwConsist}, we see that these
                 weights are consistently estimated by $R^{(i,n,c)}$ defined by
                 the equation: 
                 \begin{align*}
                      R^{(i,n,c)}_t & = 1 + \int_0^t    R^{(i,n,c)}_{s-} 
                      dK^{(i,n,c)}_s \\ 
                    K^{(i,n,c)}_t & = \int_0^t   Y^{i,c}_s U_{s-}^{i
                            \intercal} dG_s^{(n)} - 
                     \int_0^t   Y^{i,c}_s \tilde U_{s-}^{i
                            \intercal} d\tilde G_s^{(n)},
                 \end{align*}
                 where $G^{(n)}$ and $\tilde G^{(n)}$ are the usual additive
                 hazards estimates of $\int_0^\cdot g_s ds$ and $\int_0^\cdot \tilde{g}_s ds$.

\section{Discussion}\label{sec:discussion}
Marginal structural modeling is an appealing concept for causal survival analysis. Here we have developed theory for continuous-time MSMs that may motivate the approach for practical research. Indeed, we show that the continuous-time MSMs yield consistent effect estimates, even if the treatment weights are estimated from the data. Our continuous-time weights seem to perform better than the discrete time weights when we study processes that develop in continuous time. Furthermore, our weights can be estimated using additive hazard regressions, which are easy to fit in practice. Importantly, we also show that causal effect estimates on the hazard scale, e.g.\ weighted cumulative hazard estimates, can be transformed consistently to estimate other parameters that are easier to interpret causally. We thereby offer a broad strategy to obtain causal effect estimates for time-to-event outcomes. Previously, \cite{McKeague1} and \cite{McKeague2} derived results on weighted additive hazard regression, but they do not cover our needs, as our weights are estimates of likelihood ratios with respect to filtrations that are larger than the filtration for the additive hazard that we want to estimate. 


Estimators of IPTWs may be unstable and inefficient, e.g.\ when there are strong predictors of the treatment allocation. In practice, applied researchers will often face a bias-variance tradeoff when considering confounder control and efficient weight estimation. This bias-variance tradeoff has been discussed in the literature, and weight truncation has been suggested to reduce the variance, at the cost of introducing bias; see e.g.\ \cite{cole2008constructing}. Similar to IPTWs, and for the same reasons, our continuous-time weight estimator may be instable, and proper weight estimation requires a delicate balance between confounder control and precision in most practical situations.

We have considered the treatment process $A$ to be a time-to-event variable, but our strategy can be generalised to handle recurrent, or piecewise constant exposures. If $A$ is allowed to have multiple jumps, the estimation procedure becomes more complex, but the same estimators \eqref{eq:weighted cumulative hazard estimator} and \eqref{eq:Restimator} can be used with few modifications. We think, however, that many important applications can be explored assuming that $A$ is the time to an event.

A different approach that accounts for time-dependent confounding is the structural nested model, which parameterises treatment effects directly in a structural model \cite{robins2014structural}. While this procedure avoids weighting, and will often be more stable and efficient, it relies on other parametric assumptions and can be harder to implement \cite{vansteelandt2016revisiting}.

We conjecture that there is a similar consistency result as Theorem \ref{thm:AalenConsistency} when the outcome model is a weighted Cox regression. However, using a Cox model in the hypothetical scenario after marginalisation leads to restrictions on the data generating mechanisms that are not properly understood, see e.g.\ \cite{havercroft2012simulating}. This issue is related to the non-collapsibility of the Cox model, and it is a problem regardless of the weights being used are continuous or discrete.



\section{Funding}
The authors were all supported by the research grant NFR239956/F20 - Analyzing clinical health registries: Improved software and mathematics of identifiability.

      \section*{Appendix: proofs}
      We need some lemmas to prove Theorem \ref{thm:AalenConsistency}. 
            \begin{lemma} \label{lem:uniform1}
                      Suppose that $\{V^i\}_i$ are processes 
                      on $[0, T]$
                      such that
                              $\sup_ i E\big[  \sup_{s} | V^i_s    | \big]
                                      < \infty $, 
                                      then 
                                      \begin{equation} \label{eq:jhjhjh}
                              \lim_{a \rightarrow \infty} \sup_n  P \bigg(   \sup_s
                      \big| \frac 1 n \sum_{i=1}^n
                      V^{i}_{s} \big|  \geq a  
                      \bigg) = 0.  
                      \end{equation} 

                      \begin{proof}
                              By Markov's inequality, we have for
                              every $a > 0$ that
  \begin{align*}
                                     & P \bigg(   \sup_s \big| \frac 1 n \sum_{i=1}^n
                      V^{i}_{s} \big|  \geq  a 
                      \bigg) 
                      \leq \frac{1}{n a}
                        \sum_{i=1}^n  E_P \bigg[\sup_s \big| V^{i}_{s}
                       \big|\bigg], 
               \end{align*}
               which proves the claim. 
 
                      \end{proof}
              \end{lemma}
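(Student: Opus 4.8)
The plan is to reduce the process-level supremum to a sum of scalar random variables and then apply Markov's inequality. First I would observe that, pathwise and for every $n$,
$$\sup_s \Big| \frac{1}{n} \sum_{i=1}^n V^i_s \Big| \leq \frac{1}{n} \sum_{i=1}^n \sup_s |V^i_s|,$$
which follows from the triangle inequality together with bounding each $|V^i_s|$ by its own supremum over $s$. This step is the crux of the argument: it converts the supremum of an average of processes, an object for which a direct tail bound is awkward, into an average of the scalar random variables $\sup_s |V^i_s|$, each of which has finite and uniformly bounded expectation by hypothesis.

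Next I would apply Markov's inequality to the nonnegative random variable $\frac{1}{n}\sum_{i=1}^n \sup_s |V^i_s|$, which dominates the quantity of interest, giving for every $a > 0$
$$P \bigg( \sup_s \Big| \frac{1}{n} \sum_{i=1}^n V^i_s \Big| \geq a \bigg) \leq \frac{1}{na} \sum_{i=1}^n E_P \big[ \sup_s |V^i_s| \big].$$
Writing $M := \sup_i E_P[\sup_s |V^i_s|]$, which is finite by assumption, each summand is bounded by $M$, so the right-hand side is at most $\frac{1}{na}\cdot nM = M/a$. Crucially this bound no longer depends on $n$, so taking $\sup_n$ preserves it, and then letting $a \to \infty$ drives $M/a \to 0$, which is exactly the claim.

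There is no real obstacle here; the lemma is essentially a one-line Markov estimate. The only point demanding attention is that the final bound $M/a$ is genuinely uniform in $n$, which is precisely why the hypothesis is stated as a \emph{uniform-in-$i$} expectation bound rather than a bound for each fixed $i$ separately. This uniformity is what legitimises taking the supremum over $n$ before the limit in $a$, with no further argument required.
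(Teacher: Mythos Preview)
Your proof is correct and follows exactly the same route as the paper: a pathwise triangle-inequality bound followed by Markov's inequality, yielding the $n$-independent estimate $M/a$. You have simply made explicit the intermediate inequality and the final $\sup_n$ step that the paper leaves implicit.
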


\begin{lemma}[A perturbed law of large numbers]\label{lem:perturbedwlln}
               Suppose
         \begin{enumerate}[I)]
                         
                 \item $p^{-1} + q^{-1} = 1$, $p < \infty$, 

                 \item $\{V_i\}_i \subset L^p(P)$,
                         $\{ S_i\}_i \subset L^q(P)$ such that $\{ (V_{{i}},S_{{i}})\}_i$ {is } i.i.d.{, and $V_i$, $S_i$ are measurable with respect to a $\sigma$-algebra $\mathcal{F}_i$},

                 \item Triangular array  $\{S_{{(i,n)} }\}_{n,i \leq n }$ such that 
                         \begin{equation}
                                \lim\limits_{n\longrightarrow \infty} P \big( |S_{{(1,n)}}- S_{{1}}|   \geq
                                 \epsilon  \big) = 0
                         \end{equation}
                         for every $\epsilon > 0$, and there exists a
                         $\tilde S \in L^q(P)$ such that $\tilde S \geq
                         |S_{{(1,n)}}|$ for  every $n$, 
                 \item \label{condition:i independence} The conditional density of $S_{{(i,n)}}$ given   $ {\mathcal{F}_i}$
                         does not depend on $i$.  
                        \end{enumerate}
               This implies that  
               \begin{equation}
                       \lim\limits_{n\longrightarrow \infty}    E\bigg[ \bigg| \frac 1 n \sum_{i = 1} ^n  
                       S_{{(i,n)}} V_{{i}}  -E_{ P} [S_{{1}} V_{{1}}]      \bigg| \bigg] =
                       0. 
               \end{equation}
               
               \begin{proof} 
                From the triangle inequality and condition $\ref{condition:i independence})$ we have that
                       \begin{align*}
                &  E \bigg[\bigg| \frac 1 n \sum_{i = 1}^n   S_{{(i,n)}} V_{{i}} 
                 -   \frac 1 n \sum_{i = 1}^n   S_{{i}} V_{{i}} 
                 \bigg|
                 \bigg] \leq   \frac 1 n \sum_{i = 1}^n 
                   E \big[\big| 
                   \big(  S_{{(i,n)}}  -   S_{{i}}  \big)V_{{i}} 
                    \big|
                 \big] \\
                                     = & 
                   E \big[\big| \big(  S_{{(1,n)}}  -   S_{{1}}  \big)V_{{1}} \big|\big]. 
         \end{align*}
                         The dominated convergence theorem implies that 
                         the last term converges to $0$. Finally, the weak law
                         of large numbers and the triangle inequality yields
\begin{align*}
       &  \lim\limits_{n\longrightarrow \infty}   E\bigg[ \bigg| \frac 1 n \sum_{i = 1} ^n  
                       S_{{(i,n)}} V_{{i}}  -E_{ P} [S_{{1}} V_{{1}}]      \bigg| \bigg]
                        \\\leq  & \lim\limits_{n\longrightarrow \infty}  
                       E \bigg[\bigg| \frac 1 n \sum_{i = 1}^n   S_{{(i,n)}} V_{{i}} 
                 -   \frac 1 n \sum_{i = 1}^n   S_{{i}} V_{{i}}
                 \bigg|
                 \bigg]+ 
                 E \bigg[\bigg| \frac 1 n \sum_{i = 1}^n   S_{{i}} V_{{i}}
                 -      E[S_{{1}} V_{{1}}] 
                 \bigg|
                 \bigg]
                     = 0. 
\end{align*}
                         
               \end{proof}
\end{lemma}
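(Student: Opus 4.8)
The plan is to decompose the target average into a \emph{perturbation} piece and a classical law-of-large-numbers piece, and to control each separately in $L^1$. Concretely, I would write
\begin{equation*}
\frac 1 n \sum_{i=1}^n S_{(i,n)} V_i - E_P[S_1 V_1]
= \Big( \frac 1 n \sum_{i=1}^n (S_{(i,n)} - S_i) V_i \Big)
+ \Big( \frac 1 n \sum_{i=1}^n S_i V_i - E_P[S_1 V_1] \Big),
\end{equation*}
and bound the $L^1$ norm of the left-hand side by the sum of the $L^1$ norms of the two bracketed terms via the triangle inequality. It then suffices to show that each term tends to $0$ in $L^1$.

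First I would dispatch the perturbation term. Condition IV says that the conditional law of $S_{(i,n)}$ given $\mathcal F_i$ does not depend on $i$; combined with the fact that $(V_i,S_i)$ are i.i.d.\ and $\mathcal F_i$-measurable (condition II), this forces the triples $(S_{(i,n)}, S_i, V_i)$ to be identically distributed across $i$ for each fixed $n$. Hence, by the triangle inequality and this symmetry,
\begin{equation*}
E\Big[ \Big| \frac 1 n \sum_{i=1}^n (S_{(i,n)} - S_i) V_i \Big| \Big]
\leq \frac 1 n \sum_{i=1}^n E\big[ |(S_{(i,n)} - S_i) V_i| \big]
= E\big[ |(S_{(1,n)} - S_1) V_1| \big],
\end{equation*}
so the whole average collapses to a single expectation. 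To show this vanishes, I note that $S_{(1,n)} \to S_1$ in probability by condition III, whence $(S_{(1,n)} - S_1) V_1 \to 0$ in probability. The integrand is dominated by $(\tilde S + |S_1|)\,|V_1|$, which lies in $L^1(P)$ since $\tilde S, |S_1| \in L^q$, $|V_1| \in L^p$, and $p^{-1} + q^{-1} = 1$, by H\"older's inequality. The dominated convergence theorem, in its convergence-in-probability form (obtained by passing to subsequences, each of which has a further almost-surely convergent subsequence on which ordinary DCT applies), then yields $E\big[ |(S_{(1,n)} - S_1) V_1| \big] \to 0$.

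For the second term I would invoke the $L^1$ weak law of large numbers. Since $(V_i, S_i)$ are i.i.d., the products $S_i V_i$ are i.i.d., and they are integrable because $E|S_1 V_1| \leq \|S_1\|_q \|V_1\|_p < \infty$ by H\"older. For i.i.d.\ integrable summands the sample mean converges to the common expectation in $L^1$, which is precisely $E\big[\,\big|\frac 1 n \sum_i S_i V_i - E_P[S_1 V_1]\big|\,\big] \to 0$. Combining the two estimates through the initial triangle-inequality split closes the argument.

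The main obstacle is the perturbation term: everything hinges on using condition IV to reduce the $n$-term average to the single index $i=1$, since without distributional symmetry across $i$ one cannot pass from $\frac 1 n \sum_i E|(S_{(i,n)} - S_i) V_i|$ to a single controllable quantity. The secondary subtlety is producing an honest $L^1$ dominating function for the generalized dominated-convergence step, which is exactly where condition III's envelope $\tilde S$ and the conjugate-exponent bookkeeping via H\"older are essential.
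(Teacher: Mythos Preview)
Your proof is correct and follows essentially the same route as the paper: the same triangle-inequality split into a perturbation term and a law-of-large-numbers term, the same use of condition~IV to collapse the perturbation average to the single index $i=1$, dominated convergence for that term, and the $L^1$ weak law for the other. You supply more detail than the paper does---the explicit dominating function $(\tilde S + |S_1|)\,|V_1|$ with the H\"older justification, and the subsequence reduction for DCT under convergence in probability---but the architecture is identical.
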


       \begin{lemma} \label{lem:concentrationLem}
               $\{V_i\}_i$ i.i.d. non-negative variables in $L^2(P)$, then 
               \begin{equation}
                       \lim_{n \rightarrow \infty} 
               P \bigg( \frac 1  n \max_{ i \leq n} V_i 
               \geq \epsilon   \bigg) = 0
               \end{equation}
               for every $\epsilon > 0$. 
               \begin{proof}
                      Note that 
\begin{align*}
          P \bigg( \frac 1  n \max_{ i \leq n} V_i 
               > \epsilon   \bigg) &= 
               1- P \bigg( \max_{ i \leq n} V_i
               \leq  \epsilon n   \bigg) 
               =  1- P \bigg( V_1\leq  \epsilon n   \bigg)^n  \\
               &=  
                1- \bigg(1 - P \big( V_1 >  \epsilon n
                \big)\bigg)^n 
\end{align*}
 If  $n > \| V_1  \|_2\epsilon^{-1}$, we therefore have by Chebyshev's
 inequality that 
 \begin{align*}
       P \bigg( \frac 1  n \max_{ i \leq n} V_i 
               > \epsilon   \bigg)    
                \leq 1- \bigg(1 - 
                \frac{E[V_1^2]} {n^2 \epsilon^2} 
                \bigg)^n,
 \end{align*}
 where the last term converges to $0$ when $n \rightarrow \infty$
 since $\lim\limits_{n\longrightarrow \infty} n \log\big( 1 -    \frac{E[V_1^2]} {n^2 \epsilon^2} 
 \big) = 0$ for every $\epsilon > 0$.
               \end{proof}

       \end{lemma}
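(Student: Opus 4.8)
The plan is to reduce the statement to a tail bound for a single variable via a union bound, and then to spend the $L^2$ hypothesis on a crude Chebyshev-type estimate of that tail; the whole argument is elementary once this reduction is made.

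First I would rewrite the event. Since the $V_i$ are non-negative, the normalised maximum $\frac 1 n \max_{i \leq n} V_i$ reaches level $\epsilon$ exactly when at least one $V_i$ reaches level $\epsilon n$, so a union bound together with the identical distribution of the $V_i$ gives
\begin{equation*}
P\Big( \tfrac 1 n \max_{i \leq n} V_i \geq \epsilon \Big) = P\Big( \bigcup_{i=1}^n \{ V_i \geq \epsilon n \} \Big) \leq \sum_{i=1}^n P\big( V_i \geq \epsilon n \big) = n \, P\big( V_1 \geq \epsilon n \big).
\end{equation*}
One could equally factorise the complementary event using independence, writing the probability as $1 - \big(1 - P(V_1 \geq \epsilon n)\big)^n$ and then bounding via $(1-x)^n \geq 1 - nx$ (equivalently, controlling $n \log(1-x)$); this leads to exactly the same estimate, so either route is acceptable.

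Second, I would bound the single-variable tail using the second moment. Because $V_1 \geq 0$, Markov's inequality applied to $V_1^2$ yields
\begin{equation*}
P\big( V_1 \geq \epsilon n \big) = P\big( V_1^2 \geq \epsilon^2 n^2 \big) \leq \frac{E[V_1^2]}{\epsilon^2 n^2},
\end{equation*}
and combining this with the previous display gives
\begin{equation*}
P\Big( \tfrac 1 n \max_{i \leq n} V_i \geq \epsilon \Big) \leq \frac{E[V_1^2]}{\epsilon^2 n},
\end{equation*}
which tends to $0$ as $n \to \infty$ for every fixed $\epsilon > 0$, since $E[V_1^2] < \infty$ by hypothesis.

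This is where the only real subtlety lies: the union bound costs a factor of $n$, so the per-variable tail must decay strictly faster than $1/n$ for the product to vanish. The $L^2$ assumption is precisely what makes the crude Markov bound decay like $1/n^2$, leaving room to absorb the factor $n$. Had we assumed only $V_1 \in L^1$, the naive estimate $n\,P(V_1 \geq \epsilon n) \leq E[V_1]/\epsilon$ would fail to converge, and one would instead have to invoke the sharper fact that $t\,P(V_1 > t) \to 0$ for integrable $V_1$. Working in $L^2$ sidesteps that refinement entirely, so beyond this integrability bookkeeping I anticipate no obstacle.
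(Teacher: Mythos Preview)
Your proof is correct and follows essentially the same approach as the paper's: reduce to a single-variable tail and control it with the second-moment Markov/Chebyshev bound. The only cosmetic difference is that the paper uses independence to write the probability exactly as $1-\big(1-P(V_1>\epsilon n)\big)^n$ and then analyses $n\log(1-E[V_1^2]/(\epsilon^2 n^2))$ --- precisely the alternative route you mention --- whereas you go directly through the union bound to the same $E[V_1^2]/(\epsilon^2 n)$ estimate.
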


\begin{lemma} \label{lem:intConv}
        Define $\gamma_s^i := Y_s^{i,D} X^i_s \cdot b_s$, where $X^i_{s}$
                        is  the $i$'th row of $X^{(n)}_{s}$. 
                        If 
the assumptions of Theorem \ref{thm:AalenConsistency} are satisfied, then 
                                          \begin{equation}    \label{eq:integrand}                                           \lim\limits_{n\longrightarrow \infty} P \bigg( \sup_t \bigg|   
                                               \int_0^t  \Gamma^{(n)-1} 
                                               \frac 1 n \sum_{i=1}^n
       R^{(i,n)}_{s-}X_{s-}^{i \intercal}  ( \lambda^{i,D}_s -  \gamma^i_s
       )
    ds \bigg| \geq \delta \bigg)= 0
    \end{equation}
                                      for every $\delta > 0$.

        \begin{proof}
               Assumption $\ref{eq:boundedmatrix})$ from Theorem \ref{thm:AalenConsistency} and Lemma \ref{lem:uniform1}
                implies that
                \begin{equation} \label{eq:bo}\lim_{ J \rightarrow \infty} \inf_n
                                      P\bigg( \sup_t  
                                      \big| \Gamma_{t}^{(n)-1}
                                               \frac 1 n \sum_{i=1}^n
       R^{(i,n)}_{t-}X_{t-}^{i \intercal}  ( \lambda^{i,D}_t -  \gamma^i_t
       )\big|  > J \bigg) =  0.  
                                      \end{equation}
Moreover, Lemma \ref{lem:perturbedwlln} implies that  
                \begin{align*}
                          \frac 1 n \sum_{i=1}^n
       R^{(i,n)}_{t-}X_{t-}^{i \intercal}  ( \lambda^{i,D}_t -  \gamma^i_t
       )
                \end{align*}
        converges in probability to 
        \begin{align*}
         E_P  \big[R^1_{t-} X_{t-} ^{1 \intercal} \cdot ( \lambda^{1,D}_t - \gamma_t^1 )
\big]
\end{align*}
However, from the innovation theorem we have that this equals
\begin{align*} 
E_{\tilde P}  \big[X_{t-} ^{1 \intercal} \cdot ( \lambda^{1,D}_t - \gamma_t^1 )\big] 
= E_{\tilde P}  \big[X_{t-} ^{1 \intercal} \cdot (E_{ \tilde P} [ \lambda^{1,D}_t 
|\mathcal F_{t-}^{1 ,\mathcal V_0} ]
- \gamma_t^1 )\big] = 0, 
\end{align*}
since $X_{t-}^1$ and $\gamma_t^1$ are $\mathcal{F}_{t-}^{1,\mathcal{V}_0}$ measurable. This and \eqref{eq:bo} enables us to apply  \cite[Lemma II.5.3]{Andersen} to
obtain \eqref{eq:integrand}.

        \end{proof}
\end{lemma}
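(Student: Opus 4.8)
The plan is to write the integrand as the product of the matrix $\Gamma^{(n)-1}_s$ and the vector
\[
g_s^{(n)} := \frac 1 n \sum_{i=1}^n R^{(i,n)}_{s-} X_{s-}^{i\intercal}\big(\lambda_s^{i,D} - \gamma_s^i\big),
\]
and then to feed two facts about this product into \cite[Lemma II.5.3]{Andersen}: that $\{\Gamma^{(n)-1}g^{(n)}\}_n$ is tight, and that for each fixed $s$ it converges in probability to $0$. That lemma upgrades pointwise convergence of a suitably dominated integrand into uniform-in-$t$ convergence of its Lebesgue integral, which is exactly the assertion \eqref{eq:integrand}.

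First I would establish the tightness. Writing $V_s^i := R^{(i,n)}_{s-} X_{s-}^{i\intercal}(\lambda_s^{i,D}-\gamma_s^i)$, the uniform boundedness of $\{R^{(i,n)}\}$ from condition $\ref{eq:consistW})$, together with Cauchy--Schwarz and the second-moment bounds in condition $\ref{eq:lambdabound})$, gives $\sup_i E[\sup_s |V_s^i|] < \infty$ (note that $\gamma^i$ is controlled by $|X^i|$ since $b$ is bounded). Lemma~\ref{lem:uniform1} then shows that $\{g^{(n)}\}_n$ is tight. Since condition $\ref{eq:boundedmatrix})$ makes $\{\sup_s \Tr(\Gamma_s^{(n)-1})\}_n$ tight, and the positive definiteness of $\Gamma^{(n)}$ lets the trace control the operator norm of $\Gamma^{(n)-1}$, the matrix--vector product is tight as well, i.e.\
\[
\lim_{J \rightarrow \infty} \sup_n P\Big( \sup_s \big| \Gamma_s^{(n)-1} g_s^{(n)} \big| > J \Big) = 0.
\]

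Next I would show that the pointwise limit vanishes. For fixed $s$ I would apply the perturbed law of large numbers (Lemma~\ref{lem:perturbedwlln}) to the i.i.d.\ pairs $S_i = R^i_{s-}$, $V_i = X_{s-}^{i\intercal}(\lambda_s^{i,D}-\gamma_s^i)$, with perturbation $S_{(i,n)} = R^{(i,n)}_{s-}$ and exponents $p=1$, $q=\infty$. Its hypotheses match the assumptions: convergence $S_{(1,n)} \to S_1$ in probability and a bounded dominating envelope come from condition $\ref{eq:consistW})$; the $i$-independence of the conditional law of $R^{(i,n)}_{s-}$ given $\mathcal F_s^{i,\mathcal V_0 \cup \mathcal L}$ is condition $\ref{enum:equalFooting})$; and $V_1 \in L^1$ follows again from condition $\ref{eq:lambdabound})$ via Cauchy--Schwarz. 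Hence $g_s^{(n)}$ converges in probability to $E_P[R^1_{s-} X_{s-}^{1\intercal}(\lambda_s^{1,D}-\gamma_s^1)]$. Using that $R^1_{s-}$ is the likelihood ratio $d\tilde P^1/dP^1$ on $\mathcal F_{s-}^{1,\mathcal V_0 \cup \mathcal L}$, this expectation equals $E_{\tilde P}[X_{s-}^{1\intercal}(\lambda_s^{1,D}-\gamma_s^1)]$; conditioning on $\mathcal F_{s-}^{1,\mathcal V_0}$, under which $X_{s-}^1$ and $\gamma_s^1$ are measurable, and invoking the innovation theorem, which yields $E_{\tilde P}[\lambda_s^{1,D} \mid \mathcal F_{s-}^{1,\mathcal V_0}] = \gamma_s^1$ because $\gamma^1$ is precisely the $\mathcal V_0$-marginalised additive intensity $X_{s-}^{1\intercal} b_s$ of \eqref{eq:additiveOutcomeHazard}, the limit is $0$.

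Finally, combining the uniform tightness of $\Gamma_s^{(n)-1}g_s^{(n)}$ with its pointwise convergence to $0$, \cite[Lemma II.5.3]{Andersen} delivers \eqref{eq:integrand}. The hard part will not be any single estimate but the bookkeeping that links the two inputs correctly: I must verify the perturbed-LLN hypotheses with the right $L^p$/$L^q$ pairing, here $p=1$, $q=\infty$, exploiting boundedness of the weights, and I must confirm that the factor $\Gamma^{(n)-1}$ can be separated off and handled purely through tightness, so that the vanishing of the pointwise mean survives both the matrix multiplication and the integration. The conceptual crux is the identification $\gamma^1_s = E_{\tilde P}[\lambda^{1,D}_s \mid \mathcal F^{1,\mathcal V_0}_{s-}]$, which is what makes the innovation theorem collapse the limiting mean to zero.
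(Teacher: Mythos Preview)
Your proposal is correct and follows essentially the same route as the paper's proof: tightness of the integrand via Lemma~\ref{lem:uniform1} and condition~$\ref{eq:boundedmatrix})$, pointwise convergence of $g_s^{(n)}$ to $E_P[R^1_{s-}X_{s-}^{1\intercal}(\lambda^{1,D}_s-\gamma^1_s)]$ via the perturbed law of large numbers, identification of that limit as zero via the change of measure and the innovation theorem, and finally \cite[Lemma II.5.3]{Andersen}. You supply somewhat more detail than the paper does---in particular the explicit $p=1$, $q=\infty$ pairing for Lemma~\ref{lem:perturbedwlln} and the Slutsky step passing from $g^{(n)}\to 0$ to $\Gamma^{(n)-1}g^{(n)}\to 0$---but the argument is the same.
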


\begin{lemma}\label{lem:martConv}
       Suppose that  $\ref{eq:lambdabound}) $ and $\ref{eq:boundedmatrix})$ from Theorem \ref{thm:AalenConsistency}
are satisfied and 
let $M^{(n)}_t := \begin{pmatrix}N_t^{1,D} - \int_0^t \lambda_s^{1,D} ds , \dots, N_t^{n,D}
- \int_0^t \lambda_s^{n,D} ds \end{pmatrix}^\intercal$. Then  
\begin{equation} \label{eq:varConvergence}
        \Xi^{(n)}_t :=  \frac 1 n \int_0^t \Gamma^{(n)-1}_s
       X_{s-}^{(n)\intercal} Y^{(n),D}_{s} d
       M_s^{(n)} 
\end{equation}
defines a square integrable local martingale with respect to the filtration 
$\mathcal F_{s}^{1,\mathcal V_0 \cup \mathcal L} \otimes \dots \otimes \mathcal
F_{s}^{n,\mathcal V_0 \cup \mathcal L} $ and
\begin{equation}
        \lim_{n \rightarrow \infty}  
P\bigg(   \Tr ( \langle   \Xi^{(n)}\rangle_T)  \geq \delta \bigg) = 0
\end{equation}
        for every $\delta >0$.

        \begin{proof}
               Writing $\lambda^{(n)}$ for the diagonal matrix with $i$'th diagonal element equal to $\lambda^{i,D}$, we have that
                   \begin{align} \label{eq:Lenglart}
                           \Tr (\langle \Xi^{(n)} \rangle_T)  =   & 
   \int_0^T \frac {1} {n^2}  \Tr \bigg(\Gamma^{(n)-1}_s
      X_{s-}^{(n)\intercal}  Y^{(n),D}_{s} \lambda_s^{(n)}   Y^{(n),D}_{s} 
        X_{s-}^{(n)}  \Gamma^{(n)-1}_s\bigg) 
       ds. 
       \end{align}
      Moreover, 
      \begin{align}
              &  \frac {1} {n^2}  \Tr \bigg(\Gamma^{(n)-1}_s
      X_{s-}^{(n)\intercal}   Y^{(n),D}_{s} \lambda_s^{(n)}   Y^{(n),D}_{s} 
        X_{s-}^{(n)}  \Gamma^{(n)-1}_s\bigg) \\ \leq  &
        \frac {1} {n^2}  \Tr \bigg(\Gamma^{(n)-1}_s
      X_{s-}^{(n)\intercal}   Y^{(n),D}_{s} 
        X_{s-}^{(n)}  \Gamma^{(n)-1}_s\bigg) \cdot \max_{i \leq n}   Y_s^{i,D}
        R^{(i,n)}_{s-} 
        \lambda_s^{i,D}   \\ \leq  & 
         \Tr \bigg(\Gamma^{(n)-1}_s
     \bigg) \cdot \big(\frac 1 n  \max_{i \leq n}   
        \lambda_s^{i,D} \big) \cdot \|R^{(i,n)}\|_\infty \label{eq:towards0}
        \\ \leq  & 
         \Tr \bigg(\Gamma^{(n)-1}_s
     \bigg) \cdot \big(\frac 1 n  \sum_{i \leq n}   
        \lambda_s^{i,D} \big) \cdot \|R^{(i,n)}\|_\infty \label{eq:LLNbound}
\end{align}
      
      Now, $\ref{eq:boundedmatrix})$, \eqref{eq:LLNbound} and Lemma \ref{lem:uniform1} implies that  
\begin{align*}
        \lim_{J \rightarrow \infty} \inf_n P \bigg( \sup_s  \frac {1} {n^2}  \Tr \bigg(\Gamma^{(n)-1}_s
      X_{s-}^{(n)\intercal}  Y^{(n),D}_{s} \lambda_s^{(n)}  Y^{(n),D}_{s} 
        X_{s-}^{(n)}  \Gamma^{(n)-1}_s  \bigg) \geq J \bigg)  = 0.
\end{align*}
On the other hand, Lemma
\ref{lem:concentrationLem}, \eqref{eq:towards0} and $\ref{eq:boundedmatrix})$  gives us that 
\begin{align*}
        \lim_{n\rightarrow \infty}   P \bigg(  \frac {1} {n^2}  \Tr \bigg(\Gamma^{(n)-1}_s
      X_{s-}^{(n)\intercal}  Y^{(n),D}_{s} \lambda_s^{(n)}  Y^{(n),D}_{s} 
        X_{s-}^{(n)}  \Gamma^{(n)-1}_s \bigg) \geq \delta \bigg)  = 0
\end{align*}
for every $s$ and $\delta >0$, so \cite[Propositon II.5.3]{Andersen} implies that \eqref{eq:varConvergence} also holds.
        \end{proof}
\end{lemma}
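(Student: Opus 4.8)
The plan is to realise $\Xi^{(n)}$ as a stochastic integral of a predictable, locally bounded integrand against the vector counting-process martingale $M^{(n)}$, read off its predictable variation, and then show that the trace of this variation vanishes in probability by a dominated-convergence argument that separates pointwise decay from uniform domination. First I would observe that each coordinate $M^{(n),i}_t = N^{i,D}_t - \int_0^t \lambda^{i,D}_s\,ds$ is a square integrable local martingale for the filtration $\mathcal F_t^{i,\mathcal V_0 \cup \mathcal L}$, and since the individuals are i.i.d.\ it stays one for the product filtration; because distinct individuals almost surely share no jump times, the coordinates are orthogonal, so $\langle M^{(n)}\rangle_t = \int_0^t \lambda^{(n)}_s\,ds$ is the diagonal matrix with $\lambda^{(n)}$ as in the statement. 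The integrand $H_s := \tfrac 1 n \Gamma^{(n)-1}_s X^{(n)\intercal}_{s-} Y^{(n),D}_s$ is left-continuous and adapted, hence predictable, and locally bounded, so $\Xi^{(n)} = \int_0^\cdot H_s\,dM^{(n)}_s$ is a local martingale, and since $M^{(n)}$ is locally square integrable it is a square integrable local martingale with $\langle \Xi^{(n)}\rangle_t = \int_0^t H_s \lambda^{(n)}_s H_s^\intercal\,ds$; taking the trace produces the scalar time integral whose integrand I must control.

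Next I would bound that integrand. Since $Y^{(n),D}_s \lambda^{(n)}_s Y^{(n),D}_s$ is diagonal it is dominated in the Loewner order by $\big(\max_{i\le n} Y^{i,D}_s R^{(i,n)}_{s-}\lambda^{i,D}_s\big)\,Y^{(n),D}_s$; inserting this into the quadratic form and using the collapse identity $\Gamma^{(n)-1}_s\Gamma^{(n)}_s\Gamma^{(n)-1}_s = \Gamma^{(n)-1}_s$ reduces the central factor to $\tfrac 1 n \Tr\big(\Gamma^{(n)-1}_s\big)$. With $Y^{i,D}_s\le 1$ and the $R^{(i,n)}$ uniformly bounded (as assumed in Theorem \ref{thm:AalenConsistency}), this yields the pointwise bound $\Tr(\Gamma^{(n)-1}_s)\cdot\big(\tfrac 1 n \max_{i}\lambda^{i,D}_s\big)\cdot\|R^{(i,n)}\|_\infty$, and, after replacing the maximum by the sum, the coarser bound with $\tfrac 1 n \sum_{i}\lambda^{i,D}_s$ in place of the maximum.

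Finally I would pass to the limit under the integral by combining two facts. For pointwise decay I fix $s$ and apply Lemma \ref{lem:concentrationLem}: the $\lambda^{i,D}_s$ being i.i.d.\ in $L^2$ by $\ref{eq:lambdabound})$, one gets $\tfrac 1 n \max_i \lambda^{i,D}_s \to 0$ in probability, while $\Tr(\Gamma^{(n)-1}_s)$ is tight by $\ref{eq:boundedmatrix})$ and $\|R^{(i,n)}\|_\infty$ is bounded, so the integrand tends to $0$ in probability for each $s$. For the dominating envelope I apply Lemma \ref{lem:uniform1} to $\{\lambda^{i,D}\}_i$, which are identically distributed with $\sup_i E[\sup_s \lambda^{i,D}_s] < \infty$, making $\tfrac 1 n \sum_i \lambda^{i,D}_s$ tight uniformly in $s$ and $n$; together with $\ref{eq:boundedmatrix})$ this makes $\sup_s$ of the integrand tight uniformly in $n$. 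Pointwise convergence together with this uniform dominating bound then let me invoke \cite[Proposition II.5.3]{Andersen} to conclude $\Tr(\langle \Xi^{(n)}\rangle_T) = \int_0^T(\cdot)\,ds \to 0$ in probability.

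The step I expect to be the main obstacle is precisely this interchange of limit and time integral: the integrand converges only pointwise in $s$, not uniformly, and the crude bound furnished by $\tfrac 1 n \sum_i \lambda^{i,D}_s$ does not itself vanish, so the argument must reconcile the two by using the normalised maximum (via Lemma \ref{lem:concentrationLem}) for pointwise decay and the normalised sum (via Lemma \ref{lem:uniform1}) as the dominating envelope, and by checking that the latter is tight uniformly in both $s$ and $n$ so that the dominated-convergence criterion of \cite[Proposition II.5.3]{Andersen} genuinely applies.
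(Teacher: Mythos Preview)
Your proposal is correct and follows essentially the same route as the paper: you derive the same integral expression for $\Tr(\langle \Xi^{(n)}\rangle_T)$, obtain the same two-tier bound via the Loewner domination and the collapse $\Gamma^{(n)-1}\Gamma^{(n)}\Gamma^{(n)-1}=\Gamma^{(n)-1}$, use Lemma \ref{lem:concentrationLem} on $\tfrac1n\max_i\lambda^{i,D}_s$ for pointwise decay, Lemma \ref{lem:uniform1} on $\tfrac1n\sum_i\lambda^{i,D}_s$ together with $\ref{eq:boundedmatrix})$ for the uniform envelope, and then \cite[Proposition II.5.3]{Andersen} to pass the limit under the time integral. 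Your identification of the interchange step as the crux is exactly where the paper places the weight as well.
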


              \begin{proof}[Proof of Theorem \ref{thm:AalenConsistency}]
We have the following decomposition:  
\begin{align*}
         B^{(n)}_t   - B_t = &\int_0^t  (X_{s-}^{(n)\intercal }
        Y^{(n),D}_{s} X_{s-}^{(n)})^{-1}        \big(
        X_{s-}^{(n)\intercal} Y^{(n),D}_{s}   \lambda^{(n)}_s -
        X_{s-}^{(n)\intercal} Y^{(n),D}_{s}   X_{s-}^{(n)} b_s \big) ds
                        \\
       & + \int_0^t (X_{s-}^{(n)\intercal }  Y^{(n),D}_{s}
       X_{s-}^{(n)})^{-1} X_{s-}^{(n)\intercal}  Y^{(n),D}_{s} d
       M_s^{(n)} \\ = & \label{eq:kkjh} 
       \int_0^t  \Gamma^{(n)-1}      \frac 1 n \sum_{i=1}^n
        R^{(i,n)}_{s-}X_{s-}^{i \intercal}  ( \lambda^{i,D}_s -  \gamma^i_s
       )
    ds  + \Xi_t^{(n)}.
\end{align*}
     Lenglarts inequality \cite[Lemma I.3.30]{JacodShiryaev} together with Lemma \ref{lem:martConv} implies that $\Xi^{(n)}$ converges 
      uniformly in probability to $0$. 
     Moreover,  Lemma  \ref{lem:intConv}
  implies that 
    $ \int_0^\cdot  \Gamma^{(n)-1}      \frac 1 n \sum_{i=1}^n
        R^{(i,n)}_{s-}X_{s-}^{i \intercal}  ( \lambda^{i,D}_s -  \gamma^i_s
       )
    ds$
  converges in same sense to $0$, which proves the consistency.

To see that $B^{(n)}$ is P-UT, note that it coincides 
with the sum of $B_t$, 
$\Xi^{(n)}$ and $\int_0^\cdot  \Gamma_s^{(n)-1}      \frac 1 n \sum_{i=1}^n
        R^{(i,n)}_{s-}X_{s-}^{i \intercal}  ( \lambda^{i,D}_s -  \gamma^i_s
       )
    ds$. According to \cite[Lemma 1]{ryalen2018transforming}, the latter is P-UT since $\ref{eq:boundedmatrix})$ and Lemma \ref{lem:uniform1} implies  \eqref{eq:tightRho}. 
    Moreover, $B_t = \int_0^\cdot b_s ds$ is clearly
P-UT, since $b_t$ is uniformly bounded.
$\Xi^{(n)}$ is also P-UT since Lemma \ref{lem:martConv} implies that \eqref{eq:PUTMart} is
satisfied. Finally, as $B^{(n)}$ is a sum of three processes that are P-UT, it is
necessarily P-UT itself.


      \end{proof}

      \section*{Proof of Theorem \ref{thm:ahwConsist}}

\begin{lemma} \label{lemma:K(ni)}
 Suppose that \ref{assumpt:supL2}. and  \ref{assumpt:noColinearity}. from Theorem \ref{thm:ahwConsist} are satisfied, and that
\begin{enumerate}[I)]
        \item  \label{item:tightthetan}
                \begin{equation*} 
                        \lim_{a \rightarrow \infty}
                \sup_n P \bigg(    \sup_t \big| \theta^{(i,n)}_t  \big| \geq a
                \bigg) = 0, 
        \end{equation*}
 
        \item  \label{item:thetaconvergence} $\theta_{t-}^{(i,n)}$
                converges to  $\theta^i_t$ in probability for each $i$ and $t$. 
     \end{enumerate}

Then we have that $K^{(i,n)}$ is predictably uniformly tight (P-UT) and 
\begin{equation}
        \lim_{n} P\bigg(  \sup_t \big| K_t^{(i,n)} - K_t^i   \big|   \geq \epsilon \bigg) = 0
\end{equation}
for every $i$ and $\epsilon > 0$.  

\begin{proof}
Note that 
\begin{equation}
        K_t^{(i,n)} - K_t^i = 
         \int_0^t ( \theta^{(i,n)}_{s-}  -  \theta_s^i ) dN_s^{i,A} 
        + n^{-1/2}  \int_0^t Y_s ^i Z_{s-}^{i\intercal} d W^{(n)}_s
        - n^{-1/2} \int_0^t  Y_s^i \tilde Z_{s-}^{i\intercal} d
        \tilde W^{(n)}_s,
\end{equation}
where 
$W_t^{(n)} := n^{1/2}(H^{(n)}_t - H_t)$ 
and 
$\tilde W_t^{(n)} :=n^{1/2}( \tilde H^{(n)}_t - \tilde H_t)$ 
are square-integrable martingales with respect  {to}
 $\mathcal F_{t}^{1,\mathcal  V_0 \cup \mathcal L  } \otimes \cdots \otimes
 \mathcal F_{t}^{n,\mathcal V_0 \cup \mathcal L  }$ and 
  $\mathcal F_{t}^{1,\mathcal  V_0  } \otimes \cdots \otimes \mathcal F_{t}^{n,\mathcal V_0 
          }$
respectively.

               Let $\tau$ be an optional stopping time and note that 
        \begin{align*}
                E\bigg[ \bigg| \int_0^\tau (\theta^{(i,n)}_{s-}  - \theta^i_s)
                dN_s^{i,A} \bigg|
                \bigg] \leq E\bigg[  \int_0^\tau \big|\theta^{(i,n)}_{s-}  - 
                \theta^i_s 
                \big| dN_s^{i,A}
                \bigg] = E\bigg[  \int_0^\tau \big|\theta^{(i,n)}_{s-}  - \theta^i_s 
                \big| \lambda_s^{i,A} ds 
                \bigg], 
        \end{align*}
        so by Lenglarts inequality, \cite[I.3.30]{JacodShiryaev}, we see that 
        \begin{equation} \label{eq:consistNNN}
                \lim\limits_{n\longrightarrow \infty} P \bigg( \sup_{t \leq T} \bigg| \int_0^t (\theta^{(i,n)}_{s-}  - \theta^i_s)
                dN_s^{i,A} \bigg| \geq \epsilon \bigg) = 0
        \end{equation}
        for every $\epsilon > 0$ if
         \begin{equation}
                \lim\limits_{n\longrightarrow \infty} P \bigg(  \int_0^T \big| \theta^{(i,n)}_{s-}  -
                \theta^i_s \big|
                \lambda_s^{i,A} ds \geq \epsilon \bigg) = 0,
        \end{equation}
        for every $\epsilon > 0$. The latter property holds due to $\ref{item:tightthetan})$, $\ref{item:thetaconvergence})$ and
        \cite[Proposition II.5.3]{Andersen}.

Since $\{ \int_0^t Y_s^i Z_{s-}^{i\intercal}   dW^{(n)}_s  \}_n$ converges in the skorokhod topology, we have that $\{ \sup_{t \leq T}|\int_0^t Y_s^i Z_{s-}^{i\intercal}   dW^{(n)}_s |  \}_n$ is tight \cite[Theorem VI.3.21]{JacodShiryaev}. Therefore, we also get that
\begin{equation} \label{eq:consistM}
        \lim\limits_{n\longrightarrow \infty} P\bigg( \sup_{t \leq T}  |n ^{-1/2} 
        \int_0^t Y_s^i Z_{s-}^{i\intercal}   dW^{(n)}_s |\geq \epsilon \bigg) = 0
\end{equation}
for every $\epsilon > 0$. For the same reason we also have
\begin{equation} \label{eq:consisttildeM}
        \lim\limits_{n\longrightarrow \infty} P\bigg( \sup_{t \leq T}  | n ^{-1/2} \int_0^t Y_s^i \tilde 
        Z_{s-}^{i \intercal}   d\tilde W^{(n)}_s |\geq \epsilon \bigg) = 0.
\end{equation}
By combining \eqref{eq:consistM},\eqref{eq:consisttildeM} and \eqref{eq:consistNNN}, 
we obtain that  
\begin{equation}
        \lim\limits_{n\longrightarrow \infty} P\bigg( \sup_{t \leq T} | K_t ^{(i,n)} - K_t^i |\geq \epsilon \bigg) = 0 
\end{equation}
for every $\epsilon > 0$.

To see that $K^{(i,n)}$ is P-UT, note that 
the compensator of $\int_0^\cdot (\theta^{(i,n)}_{s-} - 1) dN_s^{i,A}$ equals 
$\int_0^\cdot (\theta^{(i,n)}_{s-} - 1) \lambda_s^{i,A} ds$ and 
\begin{equation*}
        \langle \int_0^\cdot (\theta^{(i,n)}_{s-} - 1) dN_s^{i,A}  - 
        \int_0^\cdot (\theta^{(i,n)}_{s-} - 1) \lambda_s^{i,A} ds \rangle_T = 
\int_0^T (\theta^{(i,n)}_{s-} - 1)^2 \lambda_s^{i,A} ds.
\end{equation*}
The assumptions $\ref{item:tightthetan})$ {in this Lemma} and $\ref{assumpt:supL2})$ in Theorem \ref{thm:ahwConsist}, together with
\cite[Lemma 1]{ryalen2018transforming}  therefore imply that 
$\int_0^\cdot (\theta^{(i,n)}_{s-} - 1) dN_s^{i,A}$ is P-UT.

To see that 
 $
\int_0^\cdot  Y_s ^i \tilde Z_{s-}^{i\intercal} d \tilde H^{(n)}_s $ 
is P-UT, note that 
\begin{equation}\label{eq:An integral is P-UT}
        \int_0^\cdot  Y_s ^i \tilde Z_{s-}^{i\intercal} d \tilde H^{(n)}_s = 
        n^{-1/2}  \int_0^\cdot  Y_s ^i \tilde Z_{s-}^{i\intercal} d \tilde W^{(n)}_s
        +  \int_0^\cdot  Y_s ^i \tilde Z_{s-}^{i\intercal} d \tilde H_s.   
\end{equation}

An analogous decompositon yields that$
\int_0^\cdot  Y_s ^i Z_{s-}^{i\intercal} d  H^{(n)}_s $ 
is P-UT. This means that $K^{(i,n)}$ is a sum of three processes that are
P-UT, and must therefore be P-UT itself. 

  \end{proof}  
\end{lemma}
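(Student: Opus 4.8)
The plan is to establish both conclusions from the explicit decomposition
\begin{equation*}
K_t^{(i,n)} - K_t^i = \int_0^t (\theta^{(i,n)}_{s-} - \theta_s^i)\, dN_s^{i,A} + n^{-1/2}\int_0^t Y_s^i Z_{s-}^{i\intercal}\, dW^{(n)}_s - n^{-1/2}\int_0^t Y_s^i \tilde Z_{s-}^{i\intercal}\, d\tilde W^{(n)}_s,
\end{equation*}
which I would obtain by subtracting the defining equations of $K^{(i,n)}$ and $K^i$ and re-expressing $H^{(n)}-H$ and $\tilde H^{(n)}-\tilde H$ through the rescaled additive-hazard residuals $W^{(n)}_t := n^{1/2}(H^{(n)}_t - H_t)$ and $\tilde W^{(n)}_t := n^{1/2}(\tilde H^{(n)}_t - \tilde H_t)$. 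The key structural point is that $W^{(n)}$ and $\tilde W^{(n)}$ are square-integrable martingales with respect to the relevant product filtrations, so that the last two terms carry an explicit vanishing factor $n^{-1/2}$ in front of a tight object, while the first term is the only genuinely non-martingale piece.

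For the consistency statement I would treat the three terms in turn. First I would handle the counting-process term by Lenglart's inequality \cite[I.3.30]{JacodShiryaev}: since $N^{i,A}$ has intensity $\lambda^{i,A}$, the expected value of the stopped absolute integral is dominated by the compensator $E[\int_0^\tau |\theta^{(i,n)}_{s-} - \theta^i_s|\,\lambda_s^{i,A}\,ds]$, so uniform convergence to $0$ reduces to showing $\int_0^T |\theta^{(i,n)}_{s-} - \theta^i_s|\,\lambda_s^{i,A}\,ds \to 0$ in probability. This last statement I would close by combining the pointwise convergence $\ref{item:thetaconvergence})$ with the uniform tightness $\ref{item:tightthetan})$ via \cite[Proposition II.5.3]{Andersen}. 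For the two remaining terms I would invoke the functional asymptotics of additive hazard regression, under which $\int_0^\cdot Y_s^i Z_{s-}^{i\intercal}\,dW^{(n)}_s$ converges in the Skorokhod topology; its running supremum is then tight \cite[Theorem VI.3.21]{JacodShiryaev}, and multiplication by $n^{-1/2}$ forces uniform convergence to $0$, and identically for the $\tilde Z$ term. Summing yields $\sup_{t\le T}|K_t^{(i,n)} - K_t^i| \to 0$ in probability.

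For the P-UT property I would again argue term by term, reducing each to the hypotheses of \cite[Lemma 1]{ryalen2018transforming}, i.e.\ a tightness bound on the drift rate and a bound on the trace of the predictable variation of the martingale part. For the counting-process integral $\int_0^\cdot (\theta^{(i,n)}_{s-} - 1)\,dN_s^{i,A}$ the compensator is $\int_0^\cdot (\theta^{(i,n)}_{s-} - 1)\lambda_s^{i,A}\,ds$ and the predictable variation of the martingale remainder equals $\int_0^T (\theta^{(i,n)}_{s-} - 1)^2 \lambda_s^{i,A}\,ds$; I would control both using the tightness $\ref{item:tightthetan})$ together with the moment bound $\ref{assumpt:supL2})$ of Theorem \ref{thm:ahwConsist}. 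For the covariate integrals I would use the splitting $\int_0^\cdot Y_s^i \tilde Z_{s-}^{i\intercal}\,d\tilde H^{(n)}_s = n^{-1/2}\int_0^\cdot Y_s^i \tilde Z_{s-}^{i\intercal}\,d\tilde W^{(n)}_s + \int_0^\cdot Y_s^i \tilde Z_{s-}^{i\intercal}\,d\tilde H_s$ into a vanishing martingale piece and a bounded-variation piece, each manifestly P-UT, and the analogous decomposition for the $Z$ term. Since $K^{(i,n)}$ is a finite sum of P-UT processes, it is P-UT itself.

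The hard part will be the counting-process term $\int_0^\cdot (\theta^{(i,n)}_{s-} - \theta^i_s)\,dN_s^{i,A}$, since $\theta^{(i,n)}$ is the ratio estimator \eqref{eq:thetaTruncated}, which is neither a martingale nor guaranteed to converge uniformly, so a direct martingale-tightness argument is unavailable. The device that makes it work is to pass from the stochastic integral to its predictable compensator through Lenglart's inequality, thereby converting pathwise supremum control into an $L^1$ control of $\int_0^T |\theta^{(i,n)}_{s-} - \theta^i_s|\,\lambda_s^{i,A}\,ds$, where the pointwise convergence $\ref{item:thetaconvergence})$ and the uniform tightness $\ref{item:tightthetan})$ of $\theta^{(i,n)}$ can finally be brought to bear.
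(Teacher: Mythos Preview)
Your proposal is correct and follows essentially the same route as the paper's own proof: the same three-term decomposition of $K^{(i,n)}-K^i$, the Lenglart reduction of the counting-process term to $\int_0^T |\theta^{(i,n)}_{s-}-\theta^i_s|\lambda_s^{i,A}\,ds$ via \cite[Proposition II.5.3]{Andersen}, the Skorokhod-tightness argument for the $W^{(n)}$ and $\tilde W^{(n)}$ terms, and the same term-by-term P-UT verification through \cite[Lemma 1]{ryalen2018transforming} and the splitting of $\int Y\tilde Z\,d\tilde H^{(n)}$ into martingale and bounded-variation parts.
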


\begin{lemma} \label{lem:sqrnderivative}
        Suppose that 
\begin{enumerate}[I)]
        \item $\{\kappa_n\}_n$ increasing sequence of positive numbers such that 
                $$\lim\limits_{n\longrightarrow \infty} \kappa_n = \infty \text{ and } 
                \sup_n \frac {\kappa_n }{\sqrt n} < \infty,$$

        \item $h_t$ 
        {is a}
        continuous vector valued function, 
        \item $Z^i $ is caglad with $E[\sup_{t\leq T} |Z_t^i|^3_3  ] < \infty $, 

        \item 
         \begin{equation}
                        \lim_{J \rightarrow \infty} \sup_n P \bigg( \Tr\Big(
                 \big( \frac 1 n   Z^{(n)\intercal}_{t-} 
                 Y^{(n),A}_t Z^{(n)}_{t-})^{-1}    
                 \Big) \geq J \bigg) =  0 \end{equation}
        \item  $Y^{i,A} Z_{\cdot -}^{i\intercal} h$ defines the intensity for
                $N^{i,A}$               with respect to
                $P$ and $\mathcal F_\cdot^{i,\mathcal{V}_0}$. 

                Now, 
                \begin{equation}
                        \lim\limits_{n\longrightarrow \infty} P \bigg(  
                        \sup_{ 1/\kappa_n \leq t \leq T} \Big|   
                        \kappa_n \int_{t - 1/\kappa_n } ^t Y_s^{i,A}
                        Z_{s-}^{i\intercal} dH_s^{(n)} 
                        - Y_t^{i,A} Z_{t-}^{i \intercal} h_t \Big| \geq \epsilon 
                        \bigg) = 0. 
                \end{equation}
\end{enumerate}

\begin{proof}
      Note that 
        \begin{align}
                & \kappa_n   \int_{t-1/\kappa_n}^t Y_s^i Z_{s-}^{i \intercal} dH^{(n)}_s - Y_t^{i} Z_{t-}^{i \intercal} h_t \\  
                = &\frac{\kappa_n }{ \sqrt n }
                \int_0^t Y_s^i Z_{s-}^{i\intercal}  dW^{(n)}_s -  \frac {\kappa_n }{ \sqrt{n} }
                \int_0^{t - 1/\kappa_n } Y_s^i Z_{s-}^{i\intercal}  dW^{(n)}_s  \\ 
                & + \kappa_n \int_{t-1/\kappa_n }^t     Y_s^i Z_{s-}^{i \intercal}
                h_s ds -  Y_t^i Z_{t-}^{i\intercal}  h_t.
        \end{align}

        The martingale central limit theorem implies that $\{ W^{(n)} \}$ is a sequence
        of martingales that converges in law to 
        a continuous Gaussian processes with 
        independent increments, see \cite{Andersen}. 
        Moreover, \cite[Proposition 1]{ryalen2018transforming} says that 
        $\{W^{(n)}\}_n$ is P-UT.  
Therefore
\cite[Theorem VI 6.22]{JacodShiryaev} implies that 
$
  \int_0 ^\cdot Y_s^{i,A} Z_{s-}^{i\intercal}  d
                W^{(n)}_s 
$
converges in law to a continuous process, so it is C-tight. 
Moreover, from \cite[Proposition VI.3.26]{JacodShiryaev} we have that 
\begin{equation} \label{eq:PUTTrick}
        \lim_{n \longrightarrow \infty} P \bigg( \sup_{1/\kappa_n \leq  t \leq T} 
        \Big| 
                        \int_0 ^t Y_s^{i,A} Z_{s-}^{i\intercal}  d
                W^{(n)}_s -  
                \int_0 ^{t - 1/\kappa_n } Y_s^{i,A} Z_{s-}^{i\intercal}  d
                W^{(n)}_s 
        \Big| \geq \epsilon       \bigg) = 0
\end{equation}
for every $\epsilon> 0$. 
The mean value theorem of elementary calculus implies that 
\begin{equation} \label{eq:meanval}
        \lim\limits_{n\longrightarrow \infty}    \sup_{1/\kappa_n \leq  t \leq T} 
        \Big|  \kappa_n   \int_{ t - 1/\kappa_n }  ^t Y_s^{i,A} Z_{s-}^{i\intercal}  
        h_s ds -  Y_t^{i,A} Z_{t-}^{i\intercal}  
                h_t \Big| = 0 \end{equation}
$P$ a.s. 
Combining \eqref{eq:PUTTrick} and \eqref{eq:meanval} yields the claim. 
\end{proof}
\end{lemma}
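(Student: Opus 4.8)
The plan is to substitute the decomposition $H^{(n)} = H + n^{-1/2} W^{(n)}$, where $W^{(n)} := \sqrt n\,(H^{(n)} - H)$ is the rescaled additive-hazard martingale (a square-integrable martingale under assumption~V), and thereby split the difference quotient into a deterministic \emph{drift} part and a martingale \emph{noise} part:
\begin{align*}
\kappa_n \int_{t-1/\kappa_n}^t Y_s^{i,A} Z_{s-}^{i\intercal}\, dH_s^{(n)} - Y_t^{i,A} Z_{t-}^{i\intercal} h_t
 &= \Big( \kappa_n \int_{t-1/\kappa_n}^t Y_s^{i,A} Z_{s-}^{i\intercal} h_s\, ds - Y_t^{i,A} Z_{t-}^{i\intercal} h_t \Big) \\
 &\quad + \frac{\kappa_n}{\sqrt n}\big( I^{(n)}_t - I^{(n)}_{t-1/\kappa_n} \big),
\end{align*}
where $I^{(n)}_t := \int_0^t Y_s^{i,A} Z_{s-}^{i\intercal}\, dW_s^{(n)}$. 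It then suffices to show that each bracketed term converges to $0$, uniformly over $1/\kappa_n \leq t \leq T$, in probability.

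For the drift term I would argue pathwise. Since $h$ is continuous (assumption~II) and the paths of $Z^i$ are caglad with the integrability of assumption~III, the bounded integrand $s \mapsto Y_s^{i,A} Z_{s-}^{i\intercal} h_s$ has enough regularity that the quantity $\kappa_n \int_{t-1/\kappa_n}^t Y_s^{i,A} Z_{s-}^{i\intercal} h_s\, ds$ is simply a backward difference quotient, over a window of width $1/\kappa_n \to 0$, of the absolutely continuous function $u \mapsto \int_0^u Y_s^{i,A} Z_{s-}^{i\intercal} h_s\, ds$. The mean value theorem (equivalently, Lebesgue differentiation) then gives convergence to the left-limit value $Y_t^{i,A} Z_{t-}^{i\intercal} h_t$, $P$-a.s.; the delicate point here is upgrading this to convergence that is \emph{uniform} in $t$, which I would do via the path-dependent modulus of continuity of the integrand on the compact interval $[0,T]$.

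The noise term is the main obstacle. The difficulty is that we amplify a stochastic-integral increment by the factor $\kappa_n/\sqrt n$, whose numerator diverges, while the window $[t-1/\kappa_n,t]$ simultaneously collapses. My strategy is to first show $\{I^{(n)}\}_n$ is C-tight: the martingale central limit theorem gives that $\{W^{(n)}\}_n$ converges in law to a continuous Gaussian process (see \cite{Andersen}) and, by \cite[Proposition~1]{ryalen2018transforming}, is P-UT; assumption~IV controls the integrand matrix, so \cite[Theorem~VI.6.22]{JacodShiryaev} yields that $I^{(n)}$ converges in law to a continuous limit and is hence C-tight. C-tightness then feeds into \cite[Proposition~VI.3.26]{JacodShiryaev} to give $\sup_{1/\kappa_n \leq t \leq T} | I^{(n)}_t - I^{(n)}_{t-1/\kappa_n}| \to 0$ in probability, i.e.\ the oscillation of $I^{(n)}$ over the shrinking window vanishes. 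Finally, because $\sup_n \kappa_n/\sqrt n < \infty$ (assumption~I), multiplying this vanishing oscillation by the \emph{bounded} factor $\kappa_n/\sqrt n$ preserves convergence to $0$; it is precisely the bandwidth constraint $\kappa_n = O(\sqrt n)$ that stops the amplification from overwhelming the shrinking increment.

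Combining the drift and noise estimates through the triangle inequality then yields the claimed uniform convergence in probability, completing the argument.
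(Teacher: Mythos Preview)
Your proposal is correct and follows essentially the same approach as the paper: the identical decomposition into drift and noise via $W^{(n)} = \sqrt{n}(H^{(n)} - H)$, the mean value theorem for the drift term, and the chain martingale CLT $\Rightarrow$ P-UT $\Rightarrow$ \cite[Theorem~VI.6.22]{JacodShiryaev} $\Rightarrow$ C-tightness $\Rightarrow$ \cite[Proposition~VI.3.26]{JacodShiryaev} for the noise term, with the bound $\sup_n \kappa_n/\sqrt n < \infty$ absorbing the amplification factor. You are slightly more explicit than the paper about the uniformity in the drift step and the role of the bandwidth constraint, but the argument is the same.
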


\begin{proof}[Proof of Theorem \ref{thm:ahwConsist}]
       
               Combining \eqref{eq:adhoc} and the decomposition in the proof of Lemma 
        \ref{lem:sqrnderivative}, we see that 
        \begin{equation}
                 \lim\limits_{n\longrightarrow \infty} P \bigg( \sup_{  1/ \kappa_n \leq t  \leq T}  \bigg|
                 \kappa_n   \int_{t- 1/\kappa_n} ^t Y_s^{i,A}
                        \tilde Z_{s-}^{i\intercal} d
                        \tilde H^{(n)}_s  /\tilde \lambda_t^{i,A}   - 1 \bigg| \geq
                        \epsilon \bigg) = 0.
        \end{equation}
        Combining \eqref{eq:adhoc} and \ref{item:thetaConditions}. we also have
        \begin{equation}
                 \lim\limits_{n\longrightarrow \infty} P \bigg( \sup_{  1/ \kappa_n \leq t  \leq T}  \bigg|
                 \kappa_n   \int_{t- 1/\kappa_n} ^t Y_s^{i,A}
                         Z_{s-}^{i\intercal} d
                         H^{(n)}_s  / \lambda_t^{i,A}   - 1 \bigg| \geq
                        \epsilon \bigg) = 0.
        \end{equation}
         Whenever $t \geq 1/ \kappa_n $, we have that by the
         continuous mapping theorem that 
        \begin{align*}
                & \lim\limits_{n\longrightarrow \infty} P \bigg( \sup_{1/\kappa_n \leq t \leq T  } \big|\theta^{(i,n)}_t- \theta_t^i 
                \big| \geq \epsilon \bigg) \\ = &  
                 \lim\limits_{n\longrightarrow \infty} P \bigg( \sup_{1/\kappa_n \leq t \leq T  } \big|
                \theta_t^i \cdot \bigg(\frac {    \kappa_n   \int_{t- 1/\kappa_n} ^t Y_s^{i,A}
                        \tilde Z_{s-}^{i\intercal} d
                        \tilde H^{(n)}_s  /\tilde \lambda_t^{i,A}      }   {  
                 \kappa_n   \int_{t- 1/\kappa_n} ^t Y_s^{i,A}
                         Z_{s-}^{i\intercal} d
                         H^{(n)}_s  / \lambda_t^{i,A}   } - 1\bigg)   \big| \geq
                 \epsilon \bigg) \\  =& 0. 
        \end{align*}
        Since $\theta^i$ is right-continuous at $t= 0$, we have that 
\begin{equation}
         \lim\limits_{n\longrightarrow \infty} P \bigg( \sup_{0 \leq t \leq T } \big|\theta^{(i,n)}_t- \theta_t^i 
                \big| \geq \epsilon \bigg) = 0.
\end{equation}
    
Finally, \cite[Corollary VI 3.33]{JacodShiryaev} implies that  $\{( R^{(i,n)}_0,K^{(i,n)})\}_n$ converges to
$(R_0^i, K^i)$ in probability. 
Since $K^{(i,n)}$ is P-UT,
$$
R_t^{(i,n)} =  1  + \int_0^t R_{s-}^{(i,n)} dK_{s}^{(i,n)} 
$$
and 
$$
R_t^{i} =  1  + \int_0^t R_{s-}^{i} dK_{s}^{i} 
$$
                        \cite[Theorem IX 6.9 ]{JacodShiryaev} 
                        implies that $R^{(i,n)}$ converges to 
                        $R^{i}$ in probability.

\end{proof}

\bibliography{references}
\bibliographystyle{unsrt}
\end{document}